\newcommand{\hull}{\mathrm{hull}}
\newcommand{\vertices}{V}
\newcommand{\supp}{\mathrm{Supp}}
\newcommand{\dest}{\mathrm{Succ}}
\newcommand{\post}{\mathrm{Post}}
\newcommand{\preo}{\mathrm{Pre}}
\newcommand{\prer}{\mathrm{PreR}}
\newcommand{\prep}{\mathrm{PreP}}
\newcommand{\attr}{\mathrm{Attr}}
\newcommand{\attrr}{\mathrm{AttrR}}
\newcommand{\out}{\mathsf{out}}
\newcommand{\F}{\mathbf{F}}
\newcommand{\T}{\mathcal{T}}
\newcommand{\X}{\mathcal{X}}
\newcommand{\U}{\mathcal{U}}
\newcommand{\W}{\mathcal{W}}
\newcommand{\N}{\mathcal{N}}
\newcommand{\Z}{\mathcal{Z}}
\newcommand{\G}{\mathcal{G}}
\newcommand{\product}{\mathcal{P}}
\newcommand{\act}{Act}
\newcommand{\ie}{{\it i.e., }}
\newcommand{\eg}{{\it e.g., }}
\newtheorem{remark}{Remark}
\newtheorem{definition}{Definition}
\newtheorem{problem}{Problem}
\newtheorem{theorem}{Theorem}
\newtheorem{proposition}{Proposition}
\newtheorem{corollary}{Corollary}
\newtheorem{example}{Example}
\newcommand{\rev}[2]{{\color{Magenta}#1}}
\def\@copyrightspace{\relax}
\begin{document}

\title{Temporal Logic Control for Stochastic Linear Systems using Abstraction Refinement of Probabilistic Games
\titlenote{
{\scriptsize This work was partially supported by Czech Science Foundation grant 15-17564S, People Programme (Marie Curie Actions) of the European Union's Seventh Framework Programme (FP7/2007-2013) under REA grant agreement 291734, ERC grant 267989 (QUAREM) and Start grant (279307: Graph Games), Austrian Science Fund (FWF) grants S11402-N23 (RiSE), P23499-N23 and S11407-N23 (RiSE), Czech Ministry of Education Youth and Sports grant LH11065, and NSF grants CMMI-1400167 and CNS-1035588.}
}}
%
%
%
%
%

\numberofauthors{6} 
%
\author{
%
%
\alignauthor
M\'{a}ria Svore\v{n}ov\'{a}
\\
       \affaddr{Faculty of Informatics}\\
       \affaddr{Masaryk University}\\
       \affaddr{Brno, Czech republic}\\
       \email{svorenova@mail.muni.cz}
\alignauthor Jan K\v{r}et\'{i}nsk\'{y} 
\\
       \affaddr{IST Austria}\\
       \affaddr{Klosterneuburg, Austria}\\
       \email{jan.kretinsky@ist.ac.at}
\alignauthor
Martin Chmel\'{i}k
\\
       \affaddr{IST Austria}\\
       \affaddr{Klosterneuburg, Austria}\\
       \email{martin.chmelik@ist.ac.at}
\and  
\alignauthor Krishnendu Chatterjee\\
       \affaddr{IST Austria}\\
       \affaddr{Klosterneuburg, Austria}\\
       \email{kchatterjee@ist.ac.at}
\alignauthor Ivana \v{C}ern\'{a}\\
       \affaddr{Faculty of Informatics}\\
       \affaddr{Masaryk University}\\
       \affaddr{Brno, Czech republic}\\
       \email{cerna@muni.cz}
\alignauthor Calin Belta\\
       \affaddr{Dep. of Mechanical Eng.}\\
       \affaddr{Boston University}\\
       \affaddr{Boston, MA, USA}\\
       \email{cbelta@bu.edu}
}

\maketitle

\begin{abstract}
We consider the problem of computing the set of initial states of a dynamical system such that there exists a control strategy to ensure that the trajectories satisfy a temporal logic specification with probability 1 (almost-surely). We focus on discrete-time, stochastic linear dynamics and specifications given as formulas of the Generalized Reactivity(1) fragment of Linear Temporal Logic over linear predicates in the states of the system. We propose a solution based on iterative abstraction-refinement, and turn-based 2-player probabilistic games. While the theoretical guarantee of our algorithm after any finite number of iterations is only a partial solution, we show that if our algorithm terminates, then the result is the set of satisfying initial states. Moreover, for any (partial) solution our algorithm synthesizes witness control strategies to ensure almost-sure satisfaction of the temporal logic specification. We demonstrate our approach on an illustrative case study.
\end{abstract}






\section{Introduction}\label{sec:introduction}
The formal verification problem, in which the goal is to check whether 
behaviors of a finite model satisfy a correctness specification, received a lot of attention during the past thirty years \cite{Clarke99,baierbook}. 
In contrast, in the synthesis problem the goal is to synthesize or control a finite system from a temporal logic specification.
While the synthesis problem also has a long tradition \cite{church57,buchiL69,PR89}, it has gained significant attention in formal methods more recently. For example, these techniques are being deployed in control and path planning in particular: 
model checking techniques can be adapted to synthesize (optimal) controllers for deterministic finite systems \cite{SvCeBe-ACC-2013,mengicra13}, B\"uchi and Rabin games can be reformulated as control strategies for nondeterministic systems \cite{boyan,ufukhscc12}, and probabilistic games can be used to compute controller for finite probabilistic systems such as Markov decision processes \cite{majacdc13,norman_fmsd10}.

With the widespread integration of physical and digital components in cyber physical systems, and the safety and security requirements in such systems, there is an increased need for the development of formal methods techniques for systems with infinite state spaces, normally modeled as difference or differential equations. Most of the works in the area use partitions and simulation / bisimulation relations to construct a finite abstraction of the system, followed by verification or control of the abstraction. Existing results showed that such approaches are feasible for discrete and continuous time linear systems \cite{TP03,AyDiLaBe-CDC-2012}. With some added conservatism, more complicated dynamics and stochastic dynamics can also be handled \cite{Girard:2010,Agung:ACC:2006}.

In this work, we focus on the problem of finding the set of initial states of a dynamic system from which a given constraint can be satisfied, and synthesizing the corresponding witness control strategies. In particular, we consider discrete-time continuous-domain linear stochastic dynamics with the constraints given as formulas of the Generalized Reactivity(1) (GR(1))~\cite{gr1def} fragment of Linear Temporal Logic (LTL) over linear predicates in the states of the system. The GR(1) fragment offers polynomial computational complexity as compared to the doubly exponential one of general LTL, while being expressive enough to describe most of the usually considered temporal properties \cite{gr1def}. We require the formula to be satisfied almost-surely, \ie with probability 1. The almost-sure satisfaction is the strongest probability guarantee one can achieve while accounting for the stochasticity of the dynamics.


In our proposed approach, we iteratively construct and refine a discrete abstraction of the system and solve the synthesis problem for the abstract model. The discrete model considered in this work is a turn-based 2-player probabilistic game, also called 2\nicefrac{1}{2}-player game \cite{twoandhalfplayergames}. Every iteration of our algorithm produces a partial solution given as a partition of the state space into three categories. The first is a set of satisfying initial states together with corresponding witness strategies. The second is a set of non-satisfying initial states, \ie those from which the system cannot be controlled to satisfy the specification with probability 1. Finally, some parts of the state space may remain undecided due to coarse abstraction. As the abstraction gets more precise, more states are being decided with every iteration of the algorithm. The designed solution is partially correct. That means, we guarantee soundness, \ie almost sure satisfaction of the formula by all controlled trajectories starting in the satisfying initial set and non-existence of a satisfying control strategy for non-satisfying initial states. On the other hand, completeness is only ensured if the algorithm terminates. If a weaker abstraction model, such as 2 player games, was used, there would be no soundness guarantee on the non-satisfying initial states and no completeness guarantees.
 We provide a practical implementation of the algorithm that ends after a predefined number of iterations.

The main novelty of our work is the abstraction-refinement of a dynamic system using a 2\nicefrac{1}{2}-player game. While abstraction-refinement exists for discrete systems such as non-deterministic and probabilistic systems \cite{HJM03,CCD14,norman_fmsd10,CHJM05}, and some classes of hybrid systems \cite{norman_qest11, ozay_cdc14}, to the best of our knowledge, the approach that we present in this paper is the first attempt to construct abstraction-refinement of stochastic systems with continuous state and control spaces in the form of 2\nicefrac{1}{2}-player games. The game theoretic solutions are necessary to determine what needs to be refined, and the dynamics of the linear-stochastic systems determine the refinement steps. Thus both game theoretic aspects and the dynamics of the system play a crucial role in the refinement step, see Rem.~\ref{remark:interplay}.

This paper is closely related to \cite{boyan,ebruhscc12,mortezacdc12,abateTAC11,eric_cdc12}. Our computation of the abstraction is inspired from \cite{boyan}, which, however, does not consider stochastic dynamics and does not perform refinement. The latter issue is addressed in \cite{ebruhscc12} for non-stochastic dynamics and specifications with finite-time semantics in the form of syntactically co-safe LTL formulas. The exact problem that we formulate in this paper was also considered in \cite{mortezacdc12}, but for finite-time specifications in the form of probabilistic Computation Tree Logic (PCTL) formulas and for the particular case when the control space is finite. Also, in \cite{mortezacdc12}, the abstraction is constructed in the form of an interval-valued MDP, which is less expressive than the game considered here. An uncontrolled version of the abstraction problem for a stochastic system was considered in \cite{abateTAC11}, where the finite system was in form of a Markov set chain. In \cite{eric_cdc12}, the authors consider the problem of controlling uncertain MDPs from LTL specification. When restricted to almost sure satisfaction, uncertain MDPs have the same expressivity as the games considered here. To obtain a control strategy, the authors of \cite{eric_cdc12} use dynamic programming (value iteration), as opposed to games. 

The rest of the paper is organized as follows. We give some preliminaries in Sec.~\ref{sec:preliminaries} before we formulate the problem and outline the approach in Sec.~\ref{sec:pf}. The abstraction, game, and refinement algorithms are presented in Sec.~\ref{sec:solution}. A case study is included in Sec.~\ref{sec:casestudy}. We conclude with final remarks and directions for future work in Sec.~\ref{sec:futurework}. 

\section{Notation and preliminaries}\label{sec:preliminaries}

For a non-empty set $S$, let $S^{\omega}, S^*$ and $S^+$ 
denote the set of all infinite, finite and non-empty finite sequences of elements of $S$, respectively. For $\sigma\in S^+$ and $\rho\in S^\omega$, we use $|\sigma|$ to denote the length of 
$\sigma$, and $\sigma(n)$ and $\rho(n)$ to denote the $n$-th element, for $1\leq n\leq |\sigma|$ and $n\geq 1$, respectively. 
For two sets $S_1\subseteq S^*, S_2\subseteq S^*\cup S^\omega$, we use $S_1\cdot S_2=\{s_1\cdot s_2\mid s_1\in S_1,s_2\in S_2\}$ to denote their concatenation. Finally, for a finite set $S$, $|S|$ is the cardinality of $S$, $\mathcal{D}(S)$ is the set of all probability distributions over $S$ and $\{ s \in S \mid d(s)>0 \}$ is the support set of $d\in \mathcal{D}(S)$.

\subsection{Polytopes}\label{subsec:polytopes}
A (convex) polytope $\mathcal{P}\subset \mathbb{R}^N$ is defined as the convex hull of a finite set $X=\{x_i\}_{i\in I}\subset \mathbb{R}^N$, \ie 
\begin{equation}\label{eq:vrep}\small
\mathcal{P}= \hull(X) = \{\sum \limits_{i\in I} \lambda_i x_i \mid \forall i: \lambda_i\in [0, 1], \sum \limits_{i\in I}\lambda_i = 1\}.
\end{equation}
We use $\vertices(\mathcal{P})$ to denote the vertices of $\mathcal{P}$ that is the minimum set of vectors in $\mathbb{R}^N$ for which $\mathcal{P} = \hull(\vertices(\mathcal{P}))$. Alternatively, a polytope can be defined as an intersection of a finite number of half-spaces in $\mathbb{R}^N$, \ie
\begin{equation}\label{eq:hrep}
\mathcal{P}=\{x\in \mathbb{R}^N \mid H_{\mathcal{P}} x \leq K_{\mathcal{P}}\},
\end{equation}
where $H_{\mathcal{P}}, K_{\mathcal{P}}$ are matrices of appropriate sizes. Forms in Eq.~(\ref{eq:vrep}) and (\ref{eq:hrep}) are referred to as the V-representation and H-representation of polytope $\mathcal{P}$, respectively.  A polytope $\mathcal{P}\subset \mathbb{R}^N$ is called full-dimensional if it has at least $N+1$ vertices.  
In this work, we consider all polytopes to be full-dimensional, \ie if a polytope is not full-dimensional, we consider it empty. 

\subsection{Automata and Specifications}\label{subsec:gr1}
\newcommand{\run}{r}
\newcommand{\Inf}{\mathsf{Inf}}


\begin{definition}[$\omega$-automata]\label{def:ra}
A de\-ter\-ministic $\omega$-au\-to\-ma\-ton with B\"uchi implication (aka one-pair Streett) acceptance condition is a tuple $\mathcal{A}=(Q,\Sigma,\delta,q_{0},(E,F))$, where $Q$ is a non-empty finite set of states, $\Sigma$ is a finite alphabet, $\delta \colon Q\times \Sigma \to Q$ is a deterministic transition function, $q_{0}\in Q$ is an initial state, and $(E,F) \subseteq Q\times Q$ defines an acceptance condition. 
\end{definition}

Given an automaton, every word $w\in \Sigma^\omega$ over the alphabet $\Sigma$ induces a run which is an infinite sequence of states $q_0 q_1 \ldots\in Q^\omega$, such that $q_{i+1}= \delta(q_i,w(i))$ for all $i \geq 0$. Given a run $\run$, let $\Inf(\run)$ denote the set of states that appear infinitely often in $\run$. Given a B\"uchi implication acceptance condition $(E,F)$, a run $\run$ is accepting, if $\Inf(\run) \cap E \neq \emptyset$ implies $\Inf(\run) \cap F \neq \emptyset$, \ie if the set $E$ is visited infinitely often, then the set $F$ is visited infinitely often. The B\"uchi acceptance condition is a special case of B\"uchi implication acceptance condition where $E=Q$, \ie we require $F$ to be visited infinitely often. 
The language of an automaton is the set of words that induce an accepting run.





\begin{definition}[GR(1) formulae]\label{def:gr1}
A GR(1) formula $\varphi$ is a particular type of an LTL formula over alphabet $\Sigma$ of the form 
\begin{equation}\label{eq:gr1}
\varphi = \big( \bigwedge \limits_{i=1}^{m} \varphi_i \big) \, \Longrightarrow \, \big( \bigwedge \limits_{j=1}^{n} \varphi_j \big),
\end{equation}
where each $\varphi_i,\varphi_j$ is an LTL formula that can be represented by a deterministic $\omega$-automaton with B\"{u}chi acceptance condition.
\end{definition}


The above definition of GR(1) is the extended version of the standard General Reactivity(1) fragment introduced in~\cite{gr1def}. The advantage of using GR(1) instead of full LTL as specification language is that realizability for LTL is 2EXPTIME-complete~\cite{PR89}, whereas for GR(1) it is only cubic in the size of the formula~\cite{gr1def}. Given a finite number of deterministic $\omega$-automata with B\"uchi acceptance conditions, we can construct a deterministic $\omega$-automaton with B\"uchi acceptance condition that accepts the intersection of the languages of the given automata
~\cite{baierbook}. Thus a GR(1) formula can be converted to a deterministic $\omega$-automaton with a B\"uchi implication acceptance condition.


\subsection{Games}\label{subsec:games}
\newcommand{\cop}{\mathsf{coop}}
\newcommand{\set}[1]{\{ #1 \}}
\newcommand{\almost}{\mathsf{Almost}}

In this work, we consider the following probabilistic games that generalize Markov decision processes (MDPs). 

\begin{definition}[2\nicefrac{1}{2}-player games]
A two-player turn-based probabilistic game, or 2\nicefrac{1}{2}-player game, is a tuple $\G=(S_1,S_2,\act,\delta)$, where $S_1$ and $S_2$ are disjoint finite sets of states for Player~1 and Player~2, respectively, $\act$ is a finite set of actions for the players, and $\delta\colon (S_1 \cup S_2) \times \act\to \mathcal{D}(S_1 \cup S_2)$, is a probabilistic transition function. 
\end{definition}


Let $S=S_1 \cup S_2$. 
A play of a 2\nicefrac{1}{2}-player game $\G$ is a sequence $g\in S^{\omega}$ such that for all $n\geq 1$ there exists $a\in \act$ such that $\delta(g(n),a)(g(n+1))>0$. 
A finite play is a finite prefix of a play of $\G$. 
A Player~1 strategy for $\G$ is a function $C_{\G}^1\colon S^*\cdot S_1\to \act$ that determines the Player 1 action to be applied after any finite prefix of a play ending in a Player 1 state, and strategies for Player~2 
are defined analogously. 
If there exists an implementation of a strategy that uses finite memory, \eg a finite-state transducer, the strategy is called finite-memory. If there exists an implementation that uses only one memory element, it is called memoryless. Given a Player 1 and Player 2 strategy, and a starting state, there exists a unique probability measure over sets of plays.

Given a game $\G$, an acceptance condition defines the set of accepting plays. We consider GR(1) formulae and B\"uchi implication over $S$ as accepting conditions for $\G$. The almost-sure winning set, denoted as $\almost^{\G}(\varphi)$ for a GR(1) formula and $\almost^{\G}((E,F))$ for a B\"uchi implication condition, is the set of states such that Player~1 has a strategy to ensure the objective with probability~1 irrespective of the strategy of Player~2. Formally, $\almost^{\G}(\varphi) =\{s \in S \mid \exists C_{\G}^1 \, \forall C_{\G}^2$ the probability to satisfy $\varphi$ using the two strategies and starting from state $s$ is~1$\,\}$, and $\almost^{\G}((E,F))$ is defined similarly. The almost-sure winning set $\almost^{\G}((E,F))$ for B\"uchi implication acceptance condition can be solved in quadratic time~\cite{ChaThesis,twoandhalfplayergames}. In this work, we use more intuitive, cubic time algorithm described in detail in App.~\ref{app:gamesalg}. 
Moreover, in the states of the set $\almost^{\G}((E,F))$, there always exist witness strategies, called almost-sure winning strategies, that are memoryless and indeed pure, \ie not randomized, as defined above. This follows from the fact that the B\"{u}chi implication condition can be seen as a special case of a more general parity acceptance condition~\cite{twoandhalfplayergames}. In Sec.~\ref{sec:solution}, we show how to compute the almost-sure winning set $\almost^{\G}(\varphi)$ for a GR(1) formula $\varphi$. 
 

In this work, we also consider the following cooperative interpretation of 2\nicefrac{1}{2}-player games which is an MDP or so called 1\nicefrac{1}{2}-player game. 

\begin{definition}[1\nicefrac{1}{2}-player games]
An MDP or 1\nicefrac{1}{2}-player game, is a tuple $\G=(S,\act,\delta)$, where $S, \act$ are non-empty finite sets of states and actions, and $\delta\colon S\times \act\to \mathcal{D}(S)$ is a probabilistic transition function.
\end{definition}

Given a 2\nicefrac{1}{2}-player game $\G$, the 1\nicefrac{1}{2}-player interpretation, where the players cooperate, is called $\G^{\cop}$ with $S = S_1 \cup S_2$. The almost-sure winning set in $\G^{\cop}$ for a GR(1) formula $\varphi$ is then defined as $\almost^{\G^{\cop}}(\varphi) =\{s \in S \mid \exists C_{\G}^1\, \exists C_{\G}^2 $ such that the probability to satisfy $\varphi$ using the two strategies and starting from state $s$ is~1$\,\}$, analogously for a B\"uchi implication acceptance condition.


\section{Problem Formulation}\label{sec:pf}
In this work, we assume we are given a linear stochastic system $\T$ defined as
\begin{equation}\label{eq:linstoch}
\T: \, x_{t+1}=A  x_t + B u_t + w_t, 
\end{equation}
where $x_t\in \X \subset \mathbb{R}^N, u_t\in \U\subset \mathbb{R}^M$, $\X, \U$ are polytopes in the corresponding Euclidean spaces called the state space and control space, respectively, $w_t\in \W\subset \mathbb{R}^N$ is the value at time $t$ of a random vector with values in polytope $\W$. The random vector has positive density on all values in $\W$. Finally, $A$ and $B$ are matrices of appropriate sizes.

The system $\T$ evolves in traces. A trace of a linear stochastic system $\T$ is an infinite sequence $\rho\in \X^{\omega}$ such that for every $n\geq 1$, we have $\rho(n+1)=A\rho(n)+Bu+w$ for some $u\in \U,w\in \W$. A finite trace $\sigma\in \X^+$ of $\T$ is then a finite prefix of a trace. A linear stochastic system $\T$ can be controlled using control strategies, where a control strategy is a function $C_{\T}\colon \X^+ \to \U$.

To formulate specifications over the linear stochastic system $\T$, we assume we are given a set $\Pi$ of linear predicates over the state space $\mathcal{X}$ of $\T$:
\begin{equation}\label{eq:linpred}
\Pi=\{\pi_k\mid \pi_k:c_k^T x\leq d_k,c_k\in \mathbb{R}^N,d_k\in \mathbb{R},k\in K\},
\end{equation}
where $K$ is a finite index set. Every trace of the system generates a word over $2^\Pi$, and every GR(1) specification formulated over the alphabet $\Pi$ can be interpreted over these words.


\begin{problem}\label{pf}
Given a linear stochastic system $\T$ (Eq.~(\ref{eq:linstoch})), a finite set of linear predicates $\Pi$ (Eq.~(\ref{eq:linpred})) and a GR(1) formula $\varphi$ over alphabet $\Pi$, find the set $\X_{init}$ of states $x\in \X$ for which there exists a control strategy $C_{\T}$ such that the probability that a trace starting in state $x$ using $C_{\T}$ satisfies $\varphi$ is 1, and find the corresponding strategies for $x\in \X_{init}$.
\end{problem}

\begin{figure}[t]
\begin{center}
\scalebox{0.45}{\input{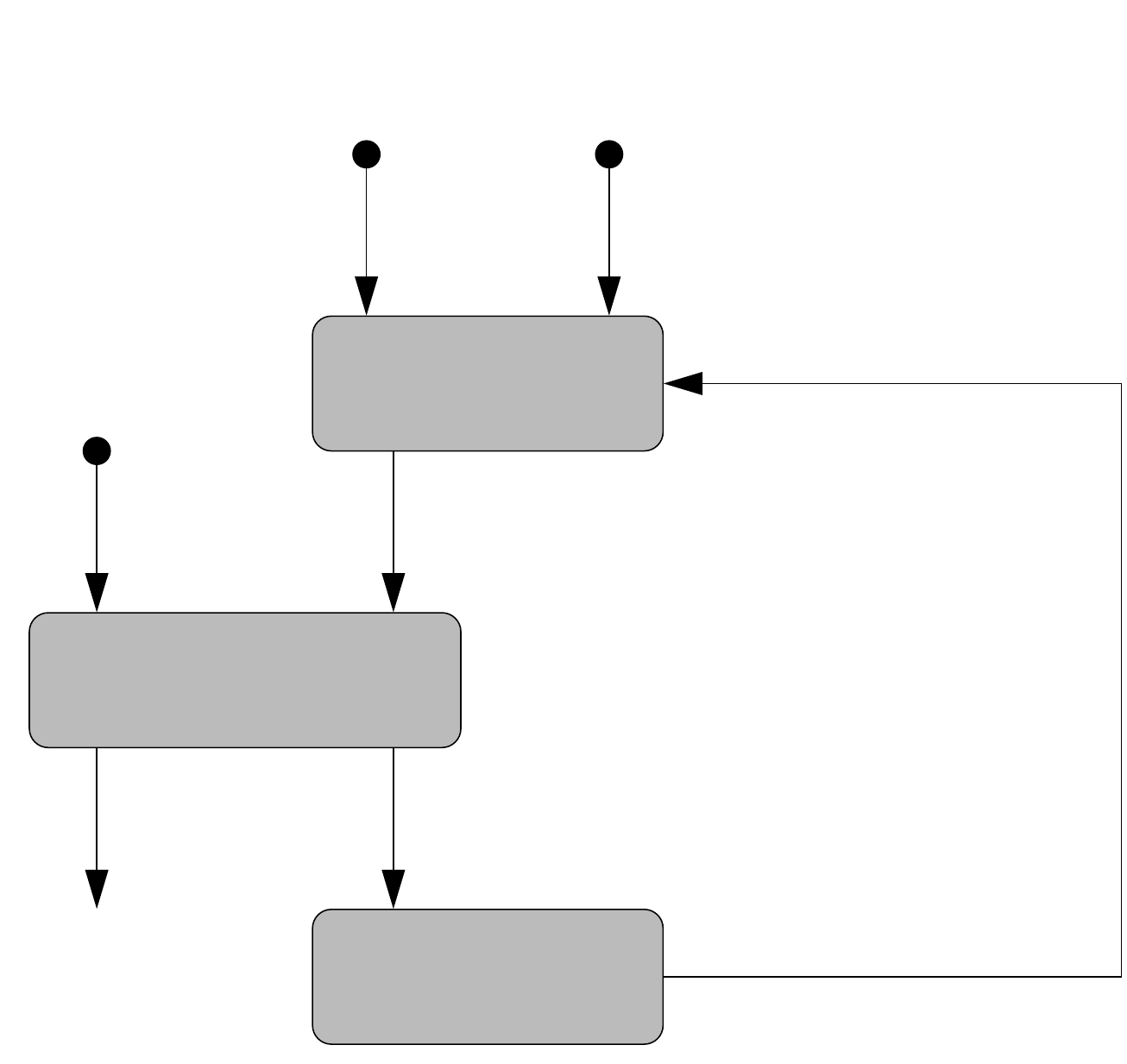_t}}
\end{center}
\caption{Graphical representation of the proposed solution to Problem~\ref{pf}.}
\label{fig:approach}
\end{figure}

\noindent\textbf{Approach overview.} The solution we propose for Problem~\ref{pf} can be summarized as follows. First, we abstract the linear stochastic system $\T$ using a 2\nicefrac{1}{2}-player game based on the partition of the state space $\X$ given by linear predicates $\Pi$. The game is built only using polytopic operations on the state space and control space. We analyze the game and identify those partition elements of the state space $\X$ that provably belong to the solution set $\X_{init}$, as well as those that do not contain any state from $\X_{init}$. The remaining parts of the state space still have the potential to contribute to the set $\X_{init}$ but are not decided yet due to coarse abstraction. In the next step, the partition of state space $\X$ is refined using deep analysis of the constructed game. Given the new partition, we build a new game and repeat the analysis. The approach can be graphically represented as shown in Fig.~\ref{fig:approach}. 

\begin{table*}
\caption{Definitions of polytopic operators $\post$ (posterior), $\preo$ (predecessor), $\prer$ (robust predecessor), $\prep$ (precise predecessor), $\attr$ (attractor) and $\attrr$ (robust attractor), where $\X'\subseteq \X, \U'\subseteq \U$ are polytopes, and $\{\X_j\}_{j\in J}$ is a set of polytopes in $\X$. The algorithms to compute all the operators are listed in App.~\ref{app:polytopicop}. 
}\label{tab:polytopeops}
\begin{center}
{
\renewcommand{\arraystretch}{1.4}
\begin{tabular}{r c l}
\hline
$\post(\X',\U')$ & $=$ & $\{ x\in \mathbb{R}^N \mid \exists x'\in \X',\exists u\in \U',\exists w\in \W:\, x=Ax'+Bu+w\}$\\
$\preo(\X',\U',\{\X_j\}_{j\in J})$ & $=$ & $\{x\in \X'\mid \exists u\in \U':\, \post(x,u) \cap \bigcup \limits_{j\in J}\X_j \text{ is non-empty}\}$\\
$\prer(\X',\U',\{\X_j\}_{j\in J})$ & $=$ & $\{x\in \X'\mid \exists u\in \U':\, \post(x,u) \subseteq \bigcup \limits_{j\in J}\X_j\}$\\
$\prep(\X',\U',\{\X_j\}_{j\in J})$ & $=$ & $\{x\in \X'\mid \exists u\in \U':\, \post(x,u) \subseteq \bigcup \limits_{j\in J}\X_j \text{ and}$\\
& & \qquad \qquad \qquad \qquad \, \,$\forall j\in J:\, \post(x,u)\cap \X_j \text{ is non-empty}\}$\\
$\attr(\X',\U',\{\X_j\}_{j\in J})$ & $=$ & $\{x\in \X'\mid \forall u\in \U':\, \post(x,u) \cap \bigcup \limits_{j\in J}\X_j \text{ is non-empty}\}$\\
$\attrr(\X',\U',\{\X_j\}_{j\in J})$ & $=$ & $\{x\in \X'\mid \forall u\in \U':\, \post(x,u) \subseteq \bigcup \limits_{j\in J}\X_j\}$\\
\hline
\end{tabular}
}
\end{center}
\end{table*}

We prove that the result of every iteration is a partial solution to Problem~\ref{pf}. In other words, the computed set of satisfying initial states as well as the set of non-satisfying initial states are correct. Moreover, they are improved or maintained with every iteration as the abstraction gets more precise. This allows us to efficiently use the proposed algorithm for a fixed number of iterations. Finally, we prove that if the algorithm terminates then the result is indeed the solution to Problem~\ref{pf}. 

The main difficulty of the approach is the abstraction-refinement of 2\nicefrac{1}{2}-player game. Abstraction-refinement has been considered for discrete systems~\cite{norman_fmsd10,HJM03,CCD14,CHJM05}, and also for some classes of hybrid systems~\cite{norman_qest11, ozay_cdc14}. However, in all of these approaches, even if the original system is considered to be probabilistic, the distributions are assumed to be discrete and given, and are not abstracted away during the refinement. The key challenge is the extension of abstraction-refinement approach to continuous stochastic systems, where the transition probabilities in the abstract discrete model need to be abstracted. We show that by exploiting the nature of the considered dynamic systems we can develop an abstraction-refinement approach for our problem, see Rem.~\ref{remark:interplay}.



\section{Solution}\label{sec:solution}

In this section, we describe the proposed solution in detail and present necessary proofs. We start with the abstraction procedure that consists of two steps. The linear stochastic system $\T$ is first abstracted using a non-deterministic transition system which is then extended to a 2\nicefrac{1}{2}-player game. The game analysis section then describes how to identify parts of the solution to Problem~\ref{pf}. The procedure for refinement is presented last. 
Finally, we prove all properties of the proposed solution.  

Let $\X_{\out}$ be the set of all states outside of the state space $\X$ that can be reached within one step in system $\T$, \ie $\X_\out$ is the set $\post(\X,\U)\backslash \X$, where $\post$ is the posterior operator defined in Tab.~\ref{tab:polytopeops}. 
Note that $\X_{\out}$ is generally not a polytope, but it can be represented as a finite set of polytopes $\{\X_{i_{\out}}\}_{i_{\out}\in I_{\out}}$, or $\{\X_{i_{\out}}\}$ for short. 
All polytopic operators that are used in this section are formally defined in Tab.~\ref{tab:polytopeops} and their computation is described in detail in App.~\ref{app:polytopicop}.

\subsection{Abstraction}\label{subsec:abstraction}

The abstraction consists of two steps. First, the linear stochastic system is abstracted using a non-determi\-nis\-tic transition system which is then extended to a 2\nicefrac{1}{2}-player game. 

\begin{definition}[NTS]\label{def:nts}
A non-deterministic transition sys\-tem (NTS) is a tuple $\N=(S,\act,\delta)$, where $S$ is a non-empty finite set of states, $\act$ is a non-empty finite set of actions, and $\delta\colon S\times \act \to 2^S$ is a non-deterministic transition function.
\end{definition}



\medskip

\noindent\textbf{NTS construction.} In order to build an NTS abstraction for $\T$, we assume we are given a partition $\{\X_i\}_{i\in I}$, or $\{\X_i\}$ for short, of the state space $\X$. Initially, the partition is given by the set of linear predicates $\Pi$, \ie it is the partition given by the equivalence relation $\sim_\Pi$ defined as
\begin{equation*}
x \sim_{\Pi} x'\quad \Longleftrightarrow \quad \forall k\in K\colon \big( \, c_k^T x\leq d_k \, \Leftrightarrow \, c_k^T x'\leq d_k \big).
\end{equation*}
In the later iterations of the algorithm, the partition is given by the refinement procedure. The construction below builds on the approach from \cite{boyan}.

We use $\N_{\{\X_i\}}=(S_{\N}, \act_{\N}, \delta_{\N})$ to denote the NTS corresponding to partition $\{\X_i\}$ defined as follows. The states of $\N_{\{\X_i\}}$ are given by the partition of the state space $\X$ and the outer part $\X_{\out}$, \ie $S_\N=\{\X_i\}\cup \{\X_{i_{\out}}\}$.   
Let $\X_i\in \{\X_i\}\subset S_\N$ be a state of the NTS, a polytope in $\X$.  
We use $\sim_i$ to denote the equivalence relation on $\U$ such that $u \sim_i u'$ if for every state $\X_j\in \{\X_i\}\cup \{\X_{i_{\out}}\}$, it holds that $\post(\X_i,u)\, \cap\, \X_j$ is non-empty if and only if $\post(\X_i,u')\, \cap\, \X_j$ is non-empty. 
Intuitively, two control inputs are equivalent with respect to $\X_i$, if from $\X_i$ the system $\T$ can transit to the same set of partition elements of $\X$ and $\X_{\out}$. The partition $\U/\sim_i$ is then the set of all actions of the NTS $\N_{\{\X_i\}}$ that are allowed in state $\X_i$. We use $\U_i^J$ to denote the union of those partition elements from $\U/\sim_i$ that contain control inputs that lead the system from $\X_i$ to polytopes $\X_j,j\in J\subseteq I\cup I_{\out}$, \ie
\begin{align}\label{eq:UiJ}
\U_i^J = \{ u\in \U \mid & \forall j\in J:\, \post(\X_i,u)\cap \X_j \text{ is non-empty and} \nonumber \\
& \forall j\not \in J:\, \post(\X_i,u)\cap \X_j \text{ is empty} \}.
\end{align}
The set $\U_i^J$ can be computed using only polytopic computations as described in App.~\ref{subapp:UiJ}. 
For a state $\X_i\in \{\X_i\}\subset S_\N$ and action $\U_{i'}^J\in \act_\N$, we let
\begin{equation*}
\delta_\N(\X_i,\U_{i'}^J)=\begin{cases}
\{\X_j \mid j\in J\} & \text{if }i=i',\\
\emptyset & \text{otherwise.}
\end{cases}
\end{equation*}
For states $\X_{i_{\out}}\in \{\X_{i_{\out}}\}\subset S_\N$, no actions or transitions are defined.

\medskip

\noindent\textbf{From NTS to game.} Since the NTS does not capture the probabilistic aspect of the linear stochastic system, we build a 2\nicefrac{1}{2}-player game on top of the NTS. Let $\X_i$ be a polytope within the state space $\X$ of $\T$, a state of $\N_{\{\X_i\}}$. When $\T$ is in a particular state $x\in \X_i$ and a control input $u\in \U_i^J$ is to be applied, we can compute the probability distribution over the set $\{\X_j\}_{j\in J}$ that determines the probability of the next state of $\T$ being in $\X_j,j\in J$, using the distribution of the random vector for uncertainty. The evolution of the system can thus be seen as a game, where Player 1 acts in states $\X_i\in S_\N$ of the NTS and chooses actions from $\act_\N$, and Player 2 determines the exact state within the polytope $\X_i$ and thus chooses the probability distribution according to which a transition in $\T$ is made. This intuitive game construction implies that Player 2 has a possibly infinite number of actions. On the 
other hand, in Problem~\ref{pf} we are interested in satisfying the GR(1) specification with probability 1 and in the theory of finite discrete probabilistic models, it is a well-studied phenomenon that in almost-sure analysis, the exact probabilities in admissible probability distributions of the model are not relevant. It is only important to know supports of such distributions, see \eg \cite{baierbook}. That means that in our case we do not need to consider all possible probability distributions as actions for Player 2, but it is enough to consider that Player 2 chooses support for the probability distribution that will be used to make a transition. For a polytope $\X_i\in S_\N$ and $\U_i^J\in \act_\N$, we use $\supp(\X_i,\U_i^J)$ to denote the set of all subsets $J'\subseteq J$ for which there exist $x\in \X_i,u\in \U_i^J$ such that the next state $x'=Ax+Bu+w$ of $\T$ belongs to $\X_j,j\in J'$ with non-zero probability and with zero probability to $\X_j,j\not \in J'$, \ie
\begin{align}\label{eq:supp}
\supp(\X_i,\U_i^J) = \{ J'\subseteq J \mid & \prep(\X_i,\U_i^J,\{\X_j\}_{j\in J'}) \nonumber \\
& \text{is non-empty}\},
\end{align} 
where $\prep$ is the precise predecessor operator from Tab.~\ref{tab:polytopeops}. 

\medskip

\noindent\textbf{Game construction.} Given the NTS $\N_{\{\X_i\}}$, the $2\nicefrac{1}{2}$-player game $\G_{\{\X_i\}}=(S_1,S_2,\act,\delta)$ is defined as follows. Player 1 states $S_1=\{\X_i\}\cup \{\X_{i_{\out}}\}$ are the states $S_\N$ of the NTS and Player 1 actions are the actions $\act_\N$ of $\N_{\{\X_i\}}$. Player 2 states are given by the choice of an action in a Player 1 state, \ie $S_2= \{\X_i\}\times \{\U_i^J\}$. The Player 2 actions available in a state $(\X_i,\U_i^J)$ are the elements of the set $\supp(\X_i,\U_i^J)$ defined in Eq.~(\ref{eq:supp}). For Player 1, the transition probability function $\delta$ defines non-zero probability transitions only for triples of the form $\X_i,\U_i^J, (\X_i,\U_i^J)$ and for such it holds $\delta(\X_i,\U_i^J)((\X_i,\U_i^J))=1$. For Player 2, the function $\delta$ defines the following transitions:
\begin{equation*}
\delta_2 \big((\X_i,\U_i^J),J'\big)\big(\X_j\big) = \begin{cases}
\frac{1}{|J'|} & \text{if }J'\in \supp(\X_i,\U_i^J)\\
& \text{and } j\in J',\\
0 & \text{otherwise}.
\end{cases}
\end{equation*}
The definition reflects the fact that once Player 2 chooses the support, the exact transition probabilities are irrelevant and without loss of generality, we can consider them to be uniform.

\begin{example}{\scshape{(Illustrative example, part I)}}\label{ex:ill1}
Let $\T$ be a linear sto\-chastic system of the form given in Eq.~(\ref{eq:linstoch}), where
\begin{equation*}
A=\begin{pmatrix}
1 & 0\\0 & 1
\end{pmatrix}, B = \begin{pmatrix}
1 & 0\\0 & 1
\end{pmatrix},
\end{equation*}
the state space is 
$\X=\{x\in \mathbb{R}^2\mid 0\leq x(1)\leq 4, 0\leq x(2)\leq 2\},$ 
the control space is  
$\U=\{u\in \mathbb{R}^2\mid -1\leq u(1),u(2)\leq 1\},$ 
and the random vector takes values in polytope $\W=\{w\in \mathbb{R}^2\mid -0.1\leq w(1),w(2)\leq 0.1\}$. Let $\Pi$ contain a single linear predicate $\pi_1\colon\, x(1)\leq 2$. In Fig.~\ref{fig:exgame1}, polytopes $\X_1$ and $\X_2$ form the partition of $\X$ given by $\Pi$, and polytopes $\X_3,\X_4,\X_5,\X_6$ form the rest of the one step reachable set of system $\T$, \ie $\X_\out$. The game $\G_{\{\X_i\}}$ given by this partition has 6 states and 18 actions. In Fig.~\ref{fig:exgame2}, we visualize part of the transition function as follows. In Player 1 state $\X_1$, if Player 1 chooses, \eg action $\U_1^{\{1,2,5\}}$ that leads from $\X_1$ to polytopes $\X_1,\X_2,\X_5$, the game is in Player 2 state $(\X_1,\U_1^{\{1,2,5\}})$ with probability 1. Actions of Player 2 are the available supports of the action over the set $\{\X_1,\X_2,\X_5\}$, which are in this case all non-empty subsets. If Player 2 chooses, \eg support $\{\X_1,\X_2\}$, the game is in Player 1 state $\X_1$ or $\X_2$ with equal probability $0.5$.

\begin{figure}[t]
\begin{center}
\includegraphics[scale=0.25]{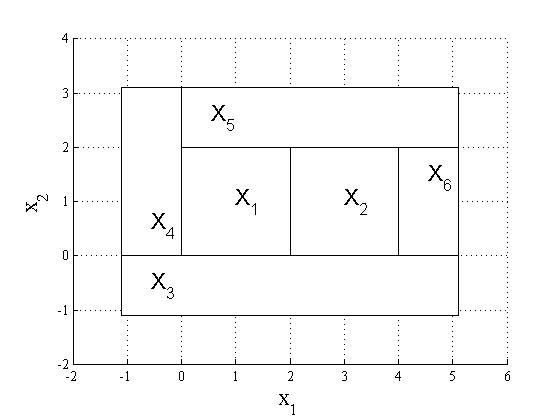} 
\end{center}
\caption{Partition of state space $\X$ of system $\T$ in Ex.~\ref{ex:ill1} given by linear predicates $\Pi$. Polytopes $\X_3,\ldots,\X_6$ form the set $\X_\out$.}\label{fig:exgame1}
\end{figure}

\begin{figure}[t]
\begin{center}
\scalebox{0.3}{\input{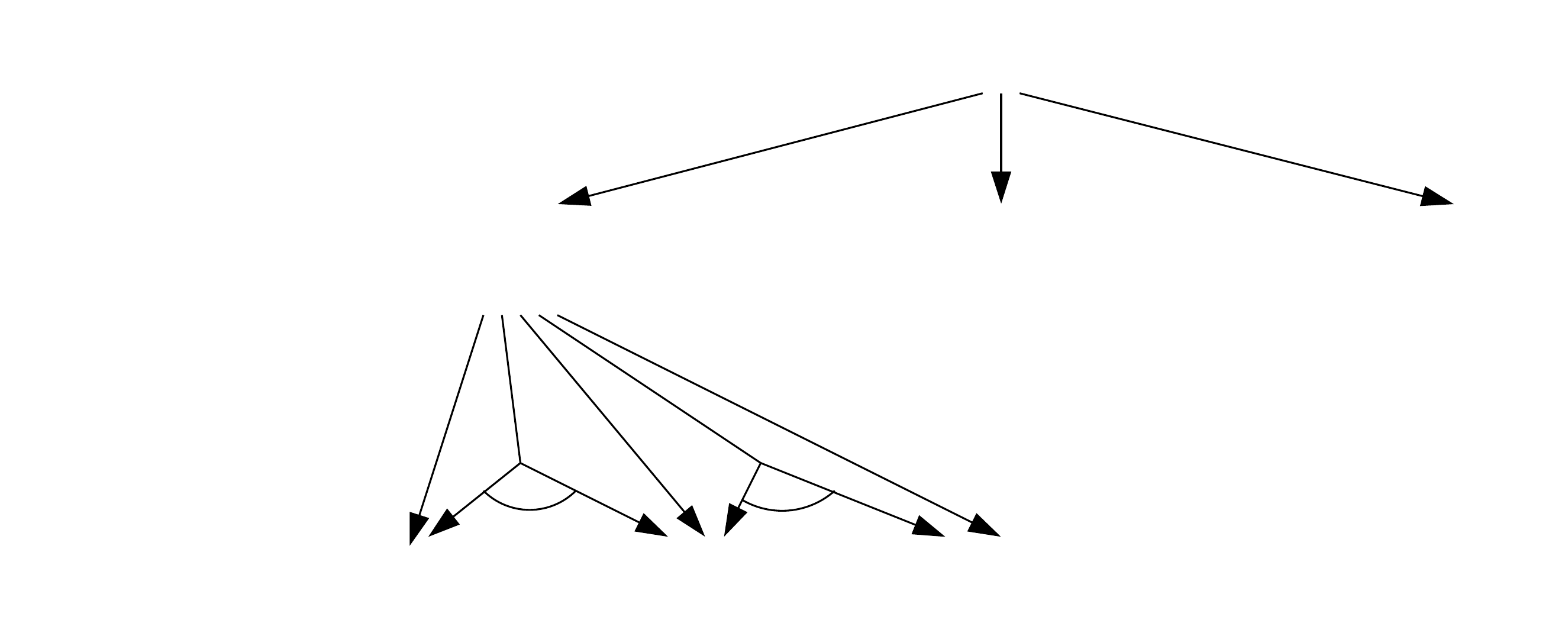_t}} 
\end{center}
\caption{Part of the transition function of the game $\G_{\{\X_i\}}$ constructed in Ex.~\ref{ex:ill1}.}\label{fig:exgame2}
\end{figure}
\end{example}

The following proposition proves that the game $\G_{\{\X_i\}}$ simulates the linear stochastic system $\T$. 

\begin{proposition}
Let $\rho$ be a trace of the linear stochastic system $\T$. 
Then there exists a play $g$ of the game $\G_{\{\X_i\}}$ such that $\rho(n)\in g(2n-1)$ for every $n\geq 1$.
\end{proposition}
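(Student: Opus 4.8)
The plan is to construct the play $g$ inductively, matching each state $\rho(n)$ of the trace to the Player~1 state $g(2n-1)$ of the game that corresponds to the partition element containing $\rho(n)$, and to fill in the intermediate Player~2 states using the support of the actual transition taken in $\T$. First I would set up the induction: since $\rho(1)\in\X$ and $\{\X_i\}$ partitions $\X$, there is a unique index $i_1\in I$ with $\rho(1)\in\X_{i_1}$; put $g(1)=\X_{i_1}$. This is the base case.

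For the inductive step, suppose we have defined $g(1),\dots,g(2n-1)$ so that $g(2n-1)=\X_{i_n}$ with $\rho(n)\in\X_{i_n}$ and $i_n\in I$ (not an outer index — this needs to be part of the invariant, and it holds because a trace of $\T$ stays in $\X$, so $\rho(n)\in\X$ always). By definition of a trace, $\rho(n+1)=A\rho(n)+Bu+w$ for some $u\in\U$, $w\in\W$. Let $J\subseteq I\cup I_{\out}$ be the set of indices $j$ such that $\post(\X_{i_n},u)\cap\X_j\neq\emptyset$; then $u\in\U_{i_n}^J$, so $\U_{i_n}^J$ is a genuine action of Player~1 available in $\X_{i_n}$, and we set $g(2n)=(\X_{i_n},\U_{i_n}^J)$, which has nonzero probability from $\X_{i_n}$ under $\delta$. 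Next, let $J'\subseteq J$ be the set of indices $j$ such that $\rho(n+1)$ lies in a polytope $\X_{j'}$ reachable with nonzero probability from the point $\rho(n)$ under $u$ — more carefully, since $w$ has positive density on all of $\W$, the set of partition elements hit with positive probability from $(\rho(n),u)$ is exactly $\{j : \post(\rho(n),u)\cap\X_j\neq\emptyset\}$, and this $J'$ witnesses $\prep(\X_{i_n},\U_{i_n}^J,\{\X_j\}_{j\in J'})\neq\emptyset$ (the point $\rho(n)$ itself is in it), so $J'\in\supp(\X_{i_n},\U_{i_n}^J)$ and is a legal Player~2 action. Pick $i_{n+1}\in J'$ with $\rho(n+1)\in\X_{i_{n+1}}$; by the uniform-support transition rule, $\delta_2((\X_{i_n},\U_{i_n}^J),J')(\X_{i_{n+1}})=\tfrac{1}{|J'|}>0$, so we may set $g(2n+1)=\X_{i_{n+1}}$. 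To keep the invariant (that $i_{n+1}\in I$, not an outer index), note $\rho(n+1)\in\X$, so the partition element containing it is one of the $\X_i$, $i\in I$; we just have to make sure the $i_{n+1}$ we picked from $J'$ is that one — indeed $\rho(n+1)$ lies in exactly one partition element of $\X\cup\X_\out$, and it is in $\X$, and it is hit with positive probability, so that element's index is in $J'$ and equals $i_{n+1}$.

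The routine bookkeeping is checking that each chosen triple has nonzero transition probability so that $g$ is a legitimate play, and that $\rho(n)\in g(2n-1)$ at every step; both fall out of the construction above. The one point needing care — and the only real obstacle — is the matching between the geometric fact ``the next point $\rho(n+1)$ lands in partition cell $\X_{i_{n+1}}$'' and the game-theoretic fact ``$i_{n+1}$ belongs to the support $J'$ chosen by Player~2''. This hinges on two things: that $\prep$ is defined via a \emph{per-point} condition so a single point $\rho(n)$ can witness membership of $J'$ in $\supp(\X_{i_n},\U_{i_n}^J)$, and that the positive-density assumption on $w$ guarantees every partition cell intersecting $\post(\rho(n),u)$ is actually reached with positive probability (so that $\rho(n+1)$'s cell is forced into $J'$ and not merely into the larger set $J$). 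Once that is pinned down the induction closes and the proposition follows.
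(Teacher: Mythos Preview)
Your approach is essentially the same as the paper's: define $g(2n-1)$ to be the partition cell containing $\rho(n)$ and $g(2n)=(\X_{i_n},\U_{i_n}^J)$ for the action class of the control input witnessing the step $\rho(n)\to\rho(n+1)$. The paper's proof stops there and leaves the verification that $g$ is a legitimate play implicit; you go further and explicitly construct the Player~2 action $J'=\{j:\post(\rho(n),u)\cap\X_j\neq\emptyset\}$, check $J'\in\supp(\X_{i_n},\U_{i_n}^J)$ via the pointwise $\prep$ condition, and confirm $i_{n+1}\in J'$ so that the transition $g(2n)\to g(2n+1)$ has positive probability --- this is exactly the detail the paper omits, and your argument for it is correct.
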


\begin{proof}
The play $g$ is defined as follows. The states $g(2n-1)=\X_i$ such that $\rho(n)\in \X_i$. The states $g(2n)=(\X_i,\U_i^J)$ such that there exist $u\in \U_i^J,w\in \W$ for which $\rho(n+1)=A\rho(n)+Bu+w$.  
\end{proof}

On the other hand, since $\G_{\{\X_i\}}$ is only an abstraction of the system $\T$, it may contain plays that do not correspond to any trace of the system. 

\subsection{Game analysis}\label{subsec:gameanalysis}
\newcommand{\yes}{\mathit{yes}}
\newcommand{\no}{\mathit{no}}

Let $\G_{\{\X_i\}}$ be the 2\nicefrac{1}{2}-player game constructed for the linear stochastic system $\T$ and partition $\{\X_i\}$ of its state space using the procedure from Sec.~\ref{subsec:abstraction}. In this section, we identify partition elements from $\{\X_i\}$ which are part of the solution set of initial states $\X_{init}$ as well as those that do not contain any satisfying initial states from $\X$. 

\medskip

\noindent\textbf{Computing satisfying states.} First, we compute the almost-sure winning set $S_{\yes}$ in game $\G_{\{\X_i\}}$ with respect to the GR(1) formula $\varphi$ from Problem~\ref{pf}, \ie
\begin{equation}\label{eq:syes}
S_{\yes} = \almost^{\G_{\{\X_i\}}}(\varphi).
\end{equation}
We proceed as follows. Let $\mathcal{A}=(Q,2^{\Pi},\delta^{\mathcal{A}},q_0,(E,F))$ be a deterministic $\omega$-automaton  with B\"{u}chi implication acceptance condition for the GR(1) formula $\varphi$ constructed as described in Sec.~\ref{subsec:gr1}. We consider the 2\nicefrac{1}{2}-player game $\mathcal{P}_{\{\X_i\}}=(S_1^{\mathcal{P}},S_2^{\mathcal{P}},\act,\delta^{\mathcal{P}})$ that is the synchronous product of $\G_{\{\X_i\}}$ and $\mathcal{A}$, \ie $S_1^{\mathcal{P}}=S_1\times Q$, $S_2^{\mathcal{P}}=S_2\times Q$, and for every $(\X_i,q)\in S_1^{\mathcal{P}}$ and $\U_i^J\in \act$ we have
\begin{equation*}\small
\delta^{\mathcal{P}}\big((\X_i,q),\U_i^J\big)\big((\X_i,\U_i^J),q')\big) =
\begin{cases}
\delta \big(\X_i,\U_i^J\big)\big((\X_i,\U_i^J)\big) & \\
\quad \text{if }\delta^{\mathcal{A}}(q,\Pi(\X_i))=q', & \\
0 & \\
\quad \text{otherwise}, &
\end{cases}
\end{equation*}
where $\Pi(\X_i)$ is the set of all linear predicates from $\Pi$ that are true on polytope $\X_i$, 
and similarly, for all $((\X_i,\U_i^J),q) \in S_2^{\mathcal{P}}$ and $J'\in \act$ we have
\begin{equation*}\small
\delta^{\mathcal{P}}\big(((\X_i,\U_i^J),q),J'\big)\big((\X_j,q')\big) =
\begin{cases}
\delta \big((\X_i,\U_i^J),J'\big)\big(\X_j\big) & \\
\quad \text{if }q=q', & \\
0 & \\
\quad \text{otherwise}. &
\end{cases}
\end{equation*}
When constructing the product game, we only consider those states from $S_1\times Q$ and $S_2\times Q$ that are reachable from some $(\X_i,q_0)$, where $q_0$ is the initial state of the automaton $\mathcal{A}$. Finally, we consider B\"{u}chi implication acceptance condition $(E^{\mathcal{P}},F^{\mathcal{P}})$, where $E^{\mathcal{P}}=(S_1^{\mathcal{P}}\cup S_2^{\mathcal{P}})\times E$ and $F^{\mathcal{P}}=(S_1^{\mathcal{P}}\cup S_2^{\mathcal{P}})\times F$.

\begin{proposition}\label{prop:syesproduct}
The set $S_{\yes}$ defined in Eq.~(\ref{eq:syes}) consists of all $\X_i\in S_1$ for which $(\X_i,q_0)\in S_\yes^{\mathcal{P}}$, where the set 
\begin{equation}\label{eq:syesp}
S_{\yes}^{\mathcal{P}} = \almost^{\mathcal{P}_{\{\X_i\}}}((E^{\mathcal{P}},F^{\mathcal{P}}))
\end{equation}
can be computed using algorithm in App.~\ref{app:gamesalg}.
\end{proposition}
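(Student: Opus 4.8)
This is an instance of the standard fact that taking a synchronous product with a \emph{deterministic} automaton preserves the almost-sure winning set, provided the automaton's acceptance condition is lifted correctly to the product. The plan has three ingredients: a measure-preserving correspondence between plays, a translation of strategies in both directions, and the final equivalence of winning sets; computability then follows at once.

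First I would fix $\X_i\in S_1$ and set up a bijection between plays. Given any play $g=\X_{i_1},(\X_{i_1},\U_{i_1}^{J_1}),\X_{i_2},(\X_{i_2},\U_{i_2}^{J_2}),\ldots$ of $\G_{\{\X_i\}}$ with $\X_{i_1}=\X_i$, determinism and totality of $\delta^{\mathcal{A}}$ produce a unique run $q_0q_1q_2\cdots$ of $\mathcal{A}$ on the word $\Pi(\X_{i_1})\Pi(\X_{i_2})\cdots\in(2^\Pi)^\omega$; interleaving it as in the definition of $\mathcal{P}_{\{\X_i\}}$ (the automaton component advances only on moves out of Player~1 states) yields a unique play $\hat g=(\X_{i_1},q_0),((\X_{i_1},\U_{i_1}^{J_1}),q_1),(\X_{i_2},q_1),\ldots$ of $\mathcal{P}_{\{\X_i\}}$ starting in $(\X_i,q_0)$. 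Totality of $\delta^{\mathcal{A}}$ guarantees that no Player~1 move is lost and no new dead end is created, so $g\mapsto\hat g$ is a bijection onto the plays of $\mathcal{P}_{\{\X_i\}}$ starting at $(\X_i,q_0)$, with inverse the projection onto the $\G$-component. By the definition of $\delta^{\mathcal{P}}$, corresponding transitions carry identical probabilities, so this bijection lifts to a measure-preserving correspondence between the probability measures on plays. Finally, by the definitions of $E^{\mathcal{P}}$, $F^{\mathcal{P}}$ and of B\"uchi implication acceptance, $\hat g$ satisfies $(E^{\mathcal{P}},F^{\mathcal{P}})$ iff the run $q_0q_1\cdots$ satisfies $(E,F)$ iff the word $\Pi(\X_{i_1})\Pi(\X_{i_2})\cdots$, hence the trace underlying $g$, satisfies $\varphi$.

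Next I would transfer strategies. A Player~1 strategy $C^1_{\mathcal{P}}$ in $\mathcal{P}_{\{\X_i\}}$ induces $C^1_{\G}$ in $\G_{\{\X_i\}}$ by: given a finite play $h$ of $\G$, form the unique corresponding finite play $\hat h$ of $\mathcal{P}$ as above and set $C^1_{\G}(h)=C^1_{\mathcal{P}}(\hat h)$; this is well defined because $\mathcal{A}$ is deterministic, and it needs only the current automaton state as extra memory, so finite memory is preserved. Conversely, $C^1_{\G}$ induces $C^1_{\mathcal{P}}$ by projecting the $\mathcal{P}$-history onto $\G$ and applying $C^1_{\G}$; Player~2 strategies translate identically. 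Under these translations a pair of strategies in $\G$ and the corresponding pair in $\mathcal{P}$ induce, via the bijection above, the same measure on plays. Combining this with the acceptance equivalence, $(\X_i,q_0)\in S_\yes^{\mathcal{P}}$ — i.e.\ some Player~1 strategy in $\mathcal{P}_{\{\X_i\}}$ forces $(E^{\mathcal{P}},F^{\mathcal{P}})$ with probability~$1$ from $(\X_i,q_0)$ against every Player~2 strategy — holds iff some Player~1 strategy in $\G_{\{\X_i\}}$ forces $\varphi$ with probability~$1$ from $\X_i$ against every Player~2 strategy, i.e.\ iff $\X_i\in\almost^{\G_{\{\X_i\}}}(\varphi)=S_\yes$. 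Since $(E^{\mathcal{P}},F^{\mathcal{P}})$ is a B\"uchi implication condition, $S_\yes^{\mathcal{P}}=\almost^{\mathcal{P}_{\{\X_i\}}}((E^{\mathcal{P}},F^{\mathcal{P}}))$ is computable by the algorithm of App.~\ref{app:gamesalg}.

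The step needing the most care is the claim that $g\mapsto\hat g$ is measure-preserving: one must check that restricting $\mathcal{P}_{\{\X_i\}}$ to the states reachable from the $(\cdot,q_0)$ states (as the construction does) discards no play starting in $(\X_i,q_0)$, and that the uniformization of Player~2's transition probabilities is consistent between $\G_{\{\X_i\}}$ and $\mathcal{P}_{\{\X_i\}}$. Both are immediate from the definitions, but they are where a subtle error could hide; everything else is bookkeeping enabled by the determinism of $\mathcal{A}$.
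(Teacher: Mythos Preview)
Your proposal is correct and is precisely the standard product-with-deterministic-automaton argument that the paper invokes but does not spell out: the paper's own proof consists of a single sentence, ``Follows directly from the construction of the game $\mathcal{P}_{\{\X_i\}}$ above and the results of~\cite{CdAH11},'' and your three ingredients (the play bijection, the strategy transfer, and the acceptance equivalence) are exactly what underlies that citation. Your description of the interleaving and of where the automaton component advances matches the paper's definition of $\delta^{\mathcal{P}}$, so there is no discrepancy in the details either.
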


\begin{proof}
Follows directly from the construction of the game $\mathcal{P}_{\{\X_i\}}$ above and the results of~\cite{CdAH11}.
\end{proof}

The next proposition proves that the polytopes from $S_{\yes}$ are part of the solution to Problem~\ref{pf}.

\begin{proposition}\label{prop:syes}
For every $\X_i\in S_{\yes}$, there exists a finite-memory strategy $C_{\T}$ for $\T$ such that every trace of $\T$ under strategy $C_{\T}$ that starts in any $x\in \X_i$ satisfies $\varphi$ with probability 1.
\end{proposition}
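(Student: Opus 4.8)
The plan is to extract a witness control strategy $C_{\T}$ for $\T$ from the almost-sure winning strategy in the product game $\product_{\{\X_i\}}$, and then argue that every trace it induces satisfies $\varphi$ with probability~1. By the remark at the end of Sec.~\ref{subsec:games}, since $(E^{\product},F^{\product})$ is (a special case of) a parity condition, in every state of $S_{\yes}^{\product}$ Player~1 has a pure memoryless almost-sure winning strategy $C_{\product}^1\colon S_1^{\product}\to \act$. For $\X_i\in S_{\yes}$ we have $(\X_i,q_0)\in S_{\yes}^{\product}$ by Prop.~\ref{prop:syesproduct}; the state component $q$ of the product tracks the automaton $\mathcal{A}$, so $C_{\product}^1$ is really a function of $(\X_j,q)$ where $\X_j$ is the current polytope and $q$ the current automaton state. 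This gives a finite-memory control strategy for $\T$: the memory is the (finite) automaton state $q$, updated along a finite trace $\sigma$ by reading off the predicate labels $\Pi(\X_{j})$ of the polytopes visited (which is well-defined, as the partition $\{\X_i\}$ refines $\sim_\Pi$); and given current polytope $\X_j\ni \sigma(|\sigma|)$ and current memory $q$, one sets $C_{\T}(\sigma)$ to be an arbitrary control input $u\in \U_j^J$, where $\U_j^J = C_{\product}^1(\X_j,q)$ is the action chosen by the winning Player~1 strategy. (Any $u$ in that equivalence class works, since all such $u$ induce the same set of successor polytopes.)

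The core of the argument is then a coupling/simulation claim: for every initial point $x\in\X_i$, the probability measure over traces of $\T$ induced by $C_{\T}$ projects, via $\rho\mapsto (\text{sequence of polytopes visited})$, onto a measure that is dominated — in terms of which plays can occur — by the measure over plays of $\product_{\{\X_i\}}$ induced by $C_{\product}^1$ against \emph{some} Player~2 strategy. Concretely, by the construction in the proof of the simulation proposition, each trace $\rho$ with $\rho(n)\in\X_{i_n}$ lifts to a play $g$ with $g(2n-1)=\X_{i_n}$; crucially, because $C_{\T}$ picks $u$ from the class dictated by $C_{\product}^1$, the Player~1 moves of $g$ agree with $C_{\product}^1$, and the Player~2 move at step $2n$ — the support $J'$ of the actual one-step distribution of $\rho$ from $\rho(n)$ under the chosen $u$ — is a legal Player~2 action by definition of $\supp(\X_{i_n},\U_{i_n}^J)$ (it is non-empty because $\rho(n)\in\prep(\cdots)$ for that $J'$). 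Hence the induced automaton run on the word generated by $\rho$ is exactly the automaton component of a play consistent with $C_{\product}^1$. Since $C_{\product}^1$ is almost-sure winning for $(E^{\product},F^{\product})$, every such play satisfies the B\"uchi implication condition with probability~1; I would make this quantitative by noting that the set of plays violating the condition has probability~0 under \emph{any} Player~2 strategy from a winning state, and that the trace-to-play lifting is measure-preserving enough that the bad event pulls back to a null set of traces. Because the automaton $\mathcal{A}$ recognizes exactly the models of $\varphi$, a trace whose induced run is accepting satisfies $\varphi$; therefore $\varphi$ holds with probability~1 along traces of $\T$ under $C_{\T}$ starting from any $x\in\X_i$.

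The main obstacle is the last measure-theoretic step: Player~2 in the game ranges over a \emph{finite} set of supports with the probabilities fixed to uniform, whereas $\T$ has a genuinely continuous noise distribution with positive density on $\W$, so the one-step distributions of $\T$ are not literally the game's distributions. The resolution — which I would spell out carefully — is precisely the almost-sure folklore invoked in Sec.~\ref{subsec:abstraction}: for $\omega$-regular objectives under almost-sure semantics on finite-support systems, only the supports of the one-step distributions matter, not their exact values, provided the supports that occur with positive probability along a play form a sub-MDP on which Player~1 wins; since the positive-density assumption on $\W$ guarantees that from $x$ with control $u$ the realized support is some fixed $J'\in\supp(\X_{i_n},\U_{i_n}^J)$ almost surely (and every point of the relevant polytope is revisited with the right support infinitely often), the qualitative reachability/recurrence structure of the $\T$-process coincides with that of a game play, and the almost-sure winning guarantee transfers. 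I would cite \cite{baierbook} for this principle and \cite{twoandhalfplayergames,ChaThesis} for the existence of pure memoryless winning strategies, and otherwise keep the proof at the level of the coupling described above.
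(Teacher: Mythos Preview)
Your proposal is correct and follows essentially the same approach as the paper: lift the (finite-memory) almost-sure winning Player~1 strategy from the game abstraction to a control strategy for $\T$ by choosing any control input in the prescribed action class, then argue that traces project to plays consistent with the winning strategy so the almost-sure guarantee transfers. The paper's own proof is a two-sentence sketch that works in $\G_{\{\X_i\}}$ with a finite-memory strategy rather than in $\product_{\{\X_i\}}$ with a memoryless one (these are equivalent, as you observe), and it simply asserts that ``the analogous property holds'' without spelling out the coupling or the support-only-matters measure-theoretic step that you correctly identify and address.
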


\begin{proof}
Let $\X_i\in S_{\yes}$ and let $C_{\G_{\{\X_i\}}}$ be a finite-memory almost-sure winning strategy for Player 1 from state $\X_i$ in game $\G_{\{\X_i\}}$, see Sec.~\ref{subsec:games}. Let $C_{\T}$ be a strategy for $\T$ defined as follows. For a finite trace $\sigma_{\T}$, let $C_{\T}(\sigma_{\T})=u$, where $u\in C_{\G_{\{\X_i\}}}(\sigma_{\G_{\{\X_i\}}})$, where $\sigma_{\G_{\{\X_i\}}}$ is finite play such that $\sigma_{\T}(n)\in \sigma_{\G_{\{\X_i\}}}(2n)$ for every $1\leq n\leq |\sigma_{\T}|$. Since $C_{\G_{\{\X_i\}}}$ for game $\G_{\{\X_i\}}$ is almost-sure winning from state $\X_i$ with respect to $\varphi$, \ie every play that starts in $\X_i$ almost-surely satisfies $\varphi$, the analogous property holds for $C_{\T}$ and traces in $\T$.
\end{proof}

\medskip

\noindent \textbf{Computing non-satisfying states.} Next, we consider the set $S_{\no}$ of Player 1 states in game $\G_{\{\X_i\}}$ defined as follows:
\begin{equation}\label{eq:sno}
S_{\no} = S_1 \setminus  \almost^{\G^{\cop}_{\{\X_i\}}}(\varphi).
\end{equation}
Intuitively, $S_{\no}$ is the set of states, where even if Player~2 cooperates with Player~1, $\varphi$ can still not be satisfied with probability~1. 

\begin{proposition}\label{prop:snoproduct}
The set $S_{\no}$ defined in Eq.~(\ref{eq:sno}) consists of all $\X_i\in S_1$ for which $(\X_i,q_0)\in S_\no^{\mathcal{P}}$, where 
\begin{equation}\label{eq:snop}
S_{\no}^{\mathcal{P}} = S_1^{\mathcal{P}} \setminus  \almost^{\mathcal{P}^{\cop}_{\{\X_i\}}}((E^{\mathcal{P}},F^{\mathcal{P}})).
\end{equation}
\end{proposition}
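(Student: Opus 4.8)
The plan is to treat this statement as the cooperative mirror of Proposition~\ref{prop:syesproduct}. Since $S_{\no}\subseteq S_1$ by definition and every $(\X_i,q_0)$ with $\X_i\in S_1$ is a Player~1 state of $\mathcal{P}_{\{\X_i\}}$ (it is reachable from itself, hence survives the reachability pruning used in the product construction), the claim $S_{\no}=\{\X_i\in S_1\mid (\X_i,q_0)\in S_\no^{\mathcal{P}}\}$ is, after taking complements within $S_1$ and within $S_1^{\mathcal{P}}$, equivalent to the single assertion: for every $\X_i\in S_1$, $\X_i\in\almost^{\G^{\cop}_{\{\X_i\}}}(\varphi)$ if and only if $(\X_i,q_0)\in\almost^{\mathcal{P}^{\cop}_{\{\X_i\}}}((E^{\mathcal{P}},F^{\mathcal{P}}))$.

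First I would show that $\mathcal{P}^{\cop}_{\{\X_i\}}$, the $1\nicefrac{1}{2}$-player interpretation of the product game, is exactly the synchronous product of the MDP $\G^{\cop}_{\{\X_i\}}$ with the deterministic automaton $\mathcal{A}$. Passing to the cooperative interpretation merely reallocates control of the moves and leaves the transition probabilities untouched, whereas the automaton component in $\delta^{\mathcal{P}}$ is advanced deterministically through the labeling $\Pi(\X_i)$ of the current polytope, a quantity that depends neither on the player owning the state nor on the action chosen. Consequently the operations \emph{take the synchronous product with $\mathcal{A}$} and \emph{pass to the cooperative interpretation} commute, and the two MDPs agree on state spaces, action sets, and transition probabilities over the reachable fragment.

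Then I would invoke the classical automata-theoretic characterization of almost-sure satisfaction in MDPs: for a deterministic $\omega$-automaton with B\"uchi implication condition $(E,F)$ recognizing $\varphi$, a single controller can enforce $\varphi$ with probability~$1$ from a state $s$ of an MDP precisely when it can enforce the induced B\"uchi implication condition from $(s,q_0)$ with probability~$1$ in the product MDP; this is standard (\cite{baierbook}), and the two-player version already used for Proposition~\ref{prop:syesproduct} and established in \cite{CdAH11} subsumes it. Specializing to $\G^{\cop}_{\{\X_i\}}$ yields the displayed equivalence, and complementation together with $\X_i\in S_1\Leftrightarrow (\X_i,q_0)\in S_1^{\mathcal{P}}$ finishes the argument.

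I expect the only delicate point to be the commutation step, i.e.\ the routine but slightly tedious check that $(\mathcal{P}_{\{\X_i\}})^{\cop}$ and the product of $\G^{\cop}_{\{\X_i\}}$ with $\mathcal{A}$ are the same MDP on their reachable parts. Once that is settled, the proposition is a direct instance of the product construction already carried out for $S_{\yes}$, so no new game-theoretic ingredient is required.
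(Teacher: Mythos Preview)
Your proposal is correct and follows essentially the same approach as the paper, which simply states that the result ``follows directly from the construction of the product game $\mathcal{P}_{\{\X_i\}}$.'' You have merely unpacked what the paper leaves implicit: that passing to the cooperative interpretation commutes with taking the synchronous product with $\mathcal{A}$, so the MDP characterization of almost-sure satisfaction via the product (the $1\nicefrac{1}{2}$-player instance of the result from~\cite{CdAH11} already used for Proposition~\ref{prop:syesproduct}) applies verbatim.
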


\begin{proof}
Follows directly from the construction of the product game $\mathcal{P}_{\{\X_i\}}$.
\end{proof}

We prove that no state $x\in \X_i$ for $\X_i\in S_\no$ is part of the solution to Problem~\ref{pf}.

\begin{proposition}\label{prop:sno}
For every $\X_i\in S_{no}$ and $x\in \X_i$, there does not exist a strategy $C_{\T}$ for $\T$ 
such that every trace of $\T$ under $C_{\T}$ starting in $x$ satisfies $\varphi$ with probability 1.
\end{proposition}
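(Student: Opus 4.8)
The plan is to prove the contrapositive of the simulation-type statement used in Prop.~\ref{prop:syes}: whereas a winning Player~1 strategy in the game induces a winning control strategy for $\T$, here I want to show that any control strategy $C_{\T}$ for $\T$ starting in $x\in \X_i$ with $\X_i\in S_{\no}$ induces a pair of strategies for the two players in the cooperative game $\G^{\cop}_{\{\X_i\}}$ (equivalently, a strategy in the MDP $\G^{\cop}_{\{\X_i\}}$) that witnesses the existence of a satisfying play, contradicting $\X_i\notin \almost^{\G^{\cop}_{\{\X_i\}}}(\varphi)$. The key observation is the same one underlying the NTS/game construction: every transition $\rho(n+1)=A\rho(n)+Bu+w$ that $\T$ can make from a point $x\in\X_i$ under some $u\in\U$ corresponds to a legal move in the game, since the action $\U_i^J$ with $J=\{j : \post(x,u)\cap\X_j \neq \emptyset\}$ is available at $\X_i$, and the support $J'=\{j : Ax+Bu+w'\in\X_j \text{ with positive probability}\}$ (which includes the realized successor's cell) is in $\supp(\X_i,\U_i^J)$ by Eq.~(\ref{eq:supp}).

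First I would fix $\X_i\in S_{\no}$, $x\in\X_i$, and suppose toward a contradiction that a control strategy $C_{\T}$ exists such that every trace from $x$ under $C_{\T}$ satisfies $\varphi$ with probability~1. Next I would lift $C_{\T}$ to cooperative game strategies: from a finite play $g$ of $\G^{\cop}_{\{\X_i\}}$ ending in a Player~1 state $g(2n-1)=\X_{i_n}$, reconstruct the corresponding finite trace prefix $\sigma_{\T}$ of $\T$ (with $\sigma_{\T}(k)$ the actual point visited — this requires carrying the real trace as memory, so the induced strategy is history-dependent, which is fine since we only need existence of \emph{some} cooperative strategy), apply $u=C_{\T}(\sigma_{\T})$, and let Player~1 play $\U_{i_n}^J$ with $J$ as above; then let Player~2, in the resulting state $(\X_{i_n},\U_{i_n}^J)$, choose the support $J'$ containing the cell of the next point $\rho(n+1)$ actually realized by the dynamics. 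I would then argue that the probability measure over plays of $\G^{\cop}_{\{\X_i\}}$ induced this way projects (via $g\mapsto (g(1),g(3),g(5),\dots)$ and then to the induced word over $2^\Pi$) onto the same measure over traces/words that $C_{\T}$ induces in $\T$ — this is exactly the content of the simulation, run in the cooperative direction — so almost-sure satisfaction of $\varphi$ transfers. In particular $\almost^{\G^{\cop}_{\{\X_i\}}}(\varphi)$ is reached from $\X_i$ with positive probability, hence (since a single witnessing play suffices for membership in the cooperative winning set — the support choices give us a deterministic path through the support tree) $\X_i\in\almost^{\G^{\cop}_{\{\X_i\}}}(\varphi)$, contradicting $\X_i\in S_{\no}$.

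The main obstacle is the measure-theoretic bookkeeping in the lifting: the game's Player~2 choice replaces the true transition density on $\W$ by a uniform distribution over a support, so the induced game measure is \emph{not} literally the push-forward of $\T$'s measure. What has to be shown is that almost-sure satisfaction of an $\omega$-regular (indeed GR(1)) property is insensitive to this reweighting — only the supports matter — which is the standard fact for finite probabilistic models cited earlier (\cite{baierbook}), but here it must be applied to the combined object obtained by pairing the continuous dynamics with the finitely many reachable game/automaton states. A clean way to handle this is to note that we need only a single cooperative play (with a fixed resolution of Player~2's support choices) that satisfies $\varphi$; since $C_{\T}$ guarantees $\varphi$ with probability~1 over the genuine dynamics and the set of $w$-sequences is of full measure, one can extract a single trace of $\T$ satisfying $\varphi$ and read off the corresponding play of $\G^{\cop}_{\{\X_i\}}$ directly, sidestepping the need to compare measures. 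I expect the proof the authors give to be short, appealing to the game construction and the almost-sure-support principle, much as Prop.~\ref{prop:syes} and Prop.~\ref{prop:snoproduct} are dispatched.
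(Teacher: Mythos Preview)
Your contrapositive strategy—lift a hypothetical almost-surely winning $C_{\T}$ to a cooperative strategy in $\G^{\cop}_{\{\X_i\}}$ and derive $\X_i\in\almost^{\G^{\cop}_{\{\X_i\}}}(\varphi)$—is precisely what the paper's one-paragraph proof asserts informally (Player~2 ``makes the choice of a state inside each polytope,'' so anything $\T$ can achieve from a concrete $x$ is available to the cooperating players). You also correctly isolate the real obstacle: the game's uniform distribution over a chosen support does not match $\T$'s density-induced distribution, so one must invoke that almost-sure winning depends only on supports.

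The gap is in your proposed shortcut. The cooperative interpretation $\G^{\cop}_{\{\X_i\}}$ is a $1\nicefrac{1}{2}$-player game (an MDP), not a deterministic one: after Player~2 fixes a support $J'$, the half-player still samples uniformly from $J'$, and $J'$ is in general not a singleton, since the $\prep$ condition in Eq.~(\ref{eq:supp}) forces $\post(x,u)\subseteq\bigcup_{j\in J'}\X_j$, which excludes $|J'|=1$ whenever $Ax+Bu+\W$ straddles a cell boundary. Hence extracting a single satisfying $\T$-trace and reading off one play does \emph{not} witness $\X_i\in\almost^{\G^{\cop}_{\{\X_i\}}}(\varphi)$; a single accepting play in an MDP is strictly weaker than the existence of an almost-surely winning strategy (and likewise, reaching the winning set with positive probability does not place the start state inside it). The correct route is the one you first sketched and then abandoned: the lifted strategy, with the real $\T$-trace carried as memory and Player~2 always playing the actual one-step support $J'=\{j:\post(\rho(n),u_n)\cap\X_j\neq\emptyset\}$, induces a non-Markov process on the finite set $S_1^{\mathcal P}$ whose step-wise transition supports are all available in the MDP $\mathcal{P}^{\cop}_{\{\X_i\}}$. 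The lemma actually needed is that for $\omega$-regular objectives on finite MDPs, any such process winning with probability~1 forces its initial state into the almost-sure winning set—this is the right form of ``only supports matter'' in this setting, and it is what the paper's informal argument tacitly relies on.
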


\begin{proof}
Intuitively, from the construction of the game $\G_{\{\X_i\}}$ in Sec.~\ref{subsec:abstraction}, Player 2 represents the 
unknown precise state of the system $\T$ within in the abstraction, \ie he makes the choice of a state inside each polytope $\X_i$ at each step.
Therefore, if $\varphi$ cannot be almost-surely satisfied from $\X_i$ in the game even if the two players cooperate, in $\T$ it translates to the fact that $\varphi$ cannot be almost-surely satisfied from any $x\in \X_i$ even if we consider 
strategies that can moreover change inside each $\X_i$ arbitrarily at any moment.
\end{proof}

\medskip

\noindent \textbf{Undecided states.} Finally, consider the set 
\begin{equation}\label{eq:s?}
S_{?}=S_1\backslash (S_{\yes}\cup S_{\no}).
\end{equation}
These are the polytopes within the state space of $\T$ that have not been decided as satisfying or non-satisfying due to coarse abstraction. 
Alternatively, from Prop.~\ref{prop:syesproduct} and \ref{prop:snoproduct}, and Eq.~(\ref{eq:s?}), we can define the set $S_?$ as the set of all $\X_i\in S_1$, for which $(\X_i,q_0)\in S_?^\product$, where
\begin{equation}\label{eq:s?p}
S_{?}^\product=S_1^\product \backslash (S_{\yes}^\product\cup S_{\no}^\product).
\end{equation} 

\begin{proposition}\label{prop:s?}
For every $\X_i\in S_?$ it holds that the product game $\product_{\{\X_i\}}$ can be won cooperatively starting from the Player 1 state $(\X_i,q_0)$. Analogously, for every $(\X_i,q)\in S_?^\product$ it holds that the product game $\product_{\{\X_i\}}$ can be won cooperatively starting from $(\X_i,q)$.
\end{proposition}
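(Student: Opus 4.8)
The plan is to observe that the statement is essentially a restatement of the definition of $S_?$ (resp.\ $S_?^\product$) combined with the characterization of $S_{\no}$ from Prop.~\ref{prop:snoproduct}; no genuinely new game-theoretic argument is needed.

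First I would handle the claim about $\X_i\in S_?$. By Eq.~(\ref{eq:s?}), membership $\X_i\in S_?$ gives in particular $\X_i\notin S_{\no}$, and then by Eq.~(\ref{eq:sno}) this says $\X_i\in\almost^{\G^{\cop}_{\{\X_i\}}}(\varphi)$, i.e.\ $\varphi$ can be enforced with probability~1 in $\G_{\{\X_i\}}$ when the two players cooperate. I would then lift this to the product game via Prop.~\ref{prop:snoproduct}: that proposition identifies $S_{\no}$ with the set of $\X_i$ whose lifted state $(\X_i,q_0)$ lies in $S_{\no}^\product$, and Eq.~(\ref{eq:snop}) gives $S_{\no}^\product=S_1^\product\setminus\almost^{\product^{\cop}_{\{\X_i\}}}((E^\product,F^\product))$. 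Hence $\X_i\notin S_{\no}$ is equivalent to $(\X_i,q_0)\in\almost^{\product^{\cop}_{\{\X_i\}}}((E^\product,F^\product))$, which is exactly the assertion that $\product_{\{\X_i\}}$ can be won cooperatively from $(\X_i,q_0)$. For the ``analogously'' part I would argue directly from Eq.~(\ref{eq:s?p}): it gives $S_?^\product=S_1^\product\setminus(S_{\yes}^\product\cup S_{\no}^\product)$, so $(\X_i,q)\in S_?^\product$ implies $(\X_i,q)\notin S_{\no}^\product$, and Eq.~(\ref{eq:snop}) again yields $(\X_i,q)\in\almost^{\product^{\cop}_{\{\X_i\}}}((E^\product,F^\product))$.

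The only point needing care --- and the closest thing to an obstacle --- is to make sure the two notions of ``cooperatively winnable'' coincide: cooperative almost-sure winning of the GR(1) objective $\varphi$ in $\G_{\{\X_i\}}$ versus cooperative almost-sure winning of the B\"uchi implication $(E^\product,F^\product)$ in $\product_{\{\X_i\}}$. This is precisely what the product construction of Sec.~\ref{subsec:gameanalysis} together with Prop.~\ref{prop:snoproduct} (relying on the results of~\cite{CdAH11}) deliver, since the product construction is oblivious to whether the players are adversarial or cooperative; once Prop.~\ref{prop:snoproduct} is available, the present proposition follows immediately.
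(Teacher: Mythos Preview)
Your proposal is correct and matches the paper's own proof, which simply states that the proposition follows directly from Eq.~(\ref{eq:s?}) and~(\ref{eq:s?p}) together with Prop.~\ref{prop:syesproduct} and~\ref{prop:snoproduct}. You have unpacked exactly this chain of implications (and in fact observed that only Prop.~\ref{prop:snoproduct} is strictly needed for the ``not in $S_{\no}$ implies cooperatively winnable'' direction); your discussion of the potential obstacle is a reasonable sanity check but, as you note, is already absorbed by Prop.~\ref{prop:snoproduct}.
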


\begin{proof}
The proposition follows directly from Eq.~(\ref{eq:s?}) and (\ref{eq:s?p}), and Prop.~\ref{prop:syesproduct} and~\ref{prop:snoproduct}.
\end{proof}

\begin{example}{\scshape{(Illustrative example, part II)}}\label{ex:ill2}
Recall the linear stochastic system $\T$ from Ex.~\ref{ex:ill1} and consider GR(1) formula $\mathbf{F}\, \neg \pi_1$ over the set $\Pi$ that requires to eventually reach a state $x\in \X$ such that $x(1)\geq 2$. The deterministic $\omega$-automaton for the formula has only two states, $q_0$ and $q_1$. The automaton remains in the initial state $q_0$ until polytope $\X_2$ is visited in $\T$. Then it transits to state $q_1$ and remains there forever. The B\"{u}chi implication condition $(E,F)$ is $E=\{q_0\},F=\{q_1\}$. The solution of the game $\G_{\{\X_i\}}$ constructed in Ex.~\ref{ex:ill1} with respect to the above formula is depicted in Fig.~\ref{fig:exgame3}. 

\begin{figure}[t]
\begin{center}
\includegraphics[scale=0.25]{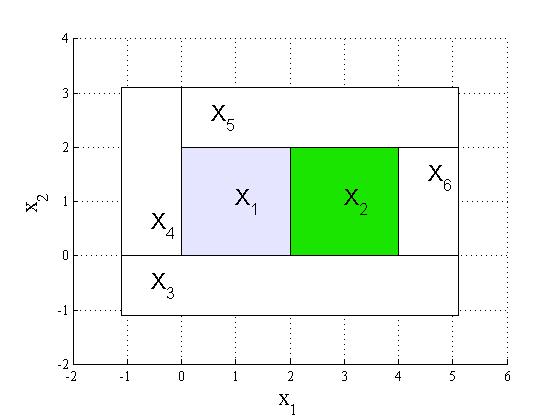} 
\end{center}
\caption{Solution of the game in Ex.~\ref{ex:ill2}. The polytopes, \ie Player 1 states, that belong to sets $S_{yes}, S_{no},S_?$ are shown in green, white and light blue, respectively.}\label{fig:exgame3}
\end{figure}
\end{example}

If the set $S_\no$ contains all Player 1 states of the game $\G_{\{X_i\}}$, the GR(1) formula $\varphi$ cannot be satisfied in the system $\T$ and our algorithm terminates. If set $S_?$ is empty, the algorithm terminates and returns the union of all polytopes from $S_\yes$ as the solution to Problem~\ref{pf}. The corresponding satisfying strategies are synthesized as described in the proof of Prop.~\ref{prop:syes}. Otherwise, we continue the algorithm by computing a refined partition of the state space $\X$ as described in the next section. 

\subsection{Refinement}\label{subsec:refinement}
Refinement is a heuristic that constructs a new partition of $\X$, a subpartition of $\{\X_i\}$, that is used in the next iteration of the overall algorithm. We design two refinement procedures, called positive and negative, that aim to enlarge the combined volume of polytopes in the set $S_{yes}$ and $S_{no}$, respectively, or equivalently, to reduce the combined volume of polytopes in the set $S_?$. Based on Prop.~\ref{prop:syesproduct} and \ref{prop:snoproduct}, both procedures are formulated over the product game $\product_{\{\X_i\}}$ and reach their respective goals through refining polytopes $\X_i$ for which $(\X_i,q)\in S_?^\product$ for some $q\in Q$.

In this section, we use $J_{yes}^{q}$ to denote the set of all indices $i\in I$ for which $(\X_i,q)\in S_{yes}^\product$, and $J_?^{q},J_{no}^{q}$ are defined analogously. In the two refinement procedures, every polytope $\X_i$ can be partitioned into a set of polytopes in iterative manner, as $(\X_i,q)\in S_?^\product$ can hold for multiple $q\in Q$. Therefore, given a partition of $\X_i$, the refinement of $\X_i$ according to a polytope $\mathcal{B}$ refers to the partition of $\X_i$ that contains all intersections and differences of elements of the original partition of $\X_i$ and polytope $\mathcal{B}$. 

\medskip

\noindent\textbf{Positive refinement.} In the positive refinement, we explore the following property of states in $S_?^\product$. 
In Prop.~\ref{prop:s?}, we proved that the product game $\product_{\{\X_i\}}$ can be won cooperatively from every $(\X_i,q)\in S_?^\product$. It follows that there exists a Player 1 action $\U_i^J$ and Player 2 action $J'$ such that after their application in $(\X_i,q)$, the game is not in a losing state with probability 1. We can graphically represent this property as follows:
\begin{equation}\label{eq:twosteps}
(\X_i,q)\xrightarrow{\U_i^J} ((\X_i,\U_i^J),q')\xrightarrow{J'} \begin{cases}
(\X_{j_1},q') & \\
\quad \vdots & \\
(\X_{j_n},q') &
\end{cases}
\end{equation}
where an arrow $a\xrightarrow{b}$ represents the uniform probability distribution $\delta_\product(a,b)$, and $\{j_1,\ldots ,j_n\}=J'\subseteq J_{yes}^{q'}\cup J_?^{q'}$. Note that from the construction of the product game $\product_{\{\X_i\}}$ in Sec.~\ref{subsec:games} it follows that $q'$ is given uniquely over all actions $\U_i^J$. The following design ensures that every polytope $\X_i$ is refined at least once for every its appearance $(\X_i,q)\in S_?^\product, q\in Q$.

Let $(\X_i,q)\in S_?^\product$. The positive refinement first refines $\X_i$ according to the robust predecessor
\begin{equation}\label{eq:posrefprer}
\prer(\X_i,\U,\{\X_j\}_{j\in J_{yes}^{q'}}).
\end{equation}
That means, we find all states $x\in \X_i$ for which there exists any control input under which the system $\T$ evolves from $x$ to a state $x'\in \X_j, j\in J_{yes}^{q'}$.

Next, the positive refinement considers three cases. First, assume that from $(\X_i,q)$, the two players can cooperatively reach a winning state of the product game in two steps with probability 1, and let $\U_i^J$ and $J'$ be Player 1 and Player 2 actions, respectively, that accomplish that, \ie in Eq.~(\ref{eq:twosteps}), $\{j_1,\ldots ,j_n\}=J'\subseteq J_{yes}^{q'}$. For every such $\U_i^J, J'$, we find an (arbitrary) partition $\{\U_y\}_{y\in Y}$ of the polytope $\U_i^J$ 
and we partition $\X_i$ according to the robust attractors 
\begin{equation}\label{eq:posrefattrr}
\attrr(\X_i,\U_y,\{\X_j\}_{j\in J_{yes}^{q'}}).
\end{equation}
Intuitively, the above set contains all $x\in \X_i$ such that under every control input $u\in \U_y$, $\T$ evolves from $x$ to a state $x'\in \X_j,j\in J_{yes}^{q'}$. Note that the robust attractor sets partition the robust predecessor set from Eq.~(\ref{eq:posrefprer}), as every state $x$ that belongs to one of the robust attractor set must lie in the robust predecessor set as well. In the next iteration of the overall algorithm, the partition elements given by the robust attractor sets will belong to the set $S_{yes}^\product$. In the second case, assume that the two players can reach a winning state of the product game cooperatively in two steps, but only with probability $0<p<1$, while the probability of reaching a losing state is $0$. Let $\U_i^J,J'$ be Player 1 and Player 2 actions, respectively, that maximize $p$, \ie in Eq.~(\ref{eq:twosteps}), there exists $m<n$ such that $\{j_1,\ldots ,j_m\}= J'\cap J_{yes}^{q'}$, $\{j_{m+1},\ldots ,j_n\}= J'\cap J_{?}^{q'}$ and $p=\frac{m}{n}$ is maximal. 
Similarly as in the first case, we refine the polytope $\X_i$ according to the robust attractor sets as in Eq.~(\ref{eq:posrefattrr}), but we compute the sets with respect to the set of indices $J_{yes}^{q'}\cup \{j_{m+1},\ldots ,j_n\}$. 
Finally, assume that $(\X_i,q)$ does not belong to any of the above two categories.
As argued at the beginning of this section, there still exist Player 1 and Player 2 actions $\U_i^J$ and $J'$, respectively, such that in Eq.~(\ref{eq:twosteps}), 
$\{j_1,\ldots ,j_n\}=J'\subseteq J_{?}^{q'}$. Again, we refine the polytope $\X_i$ according to the robust attractor sets as in Eq.~(\ref{eq:posrefattrr}), where the sets are computed with respect to the set of indices $J_{?}^{q'}$.

\begin{example}{\scshape{(Illustrative example, part III)}}\label{ex:ill3}
We de\-monstrate a part of the the positive refinement for the game in Ex.~\ref{ex:ill2}. Consider polytope $\X_1\in S_?$. It follows from the form of the $\omega$-automaton in Ex.~\ref{ex:ill2} that $\X_1$ appears in $S_?^\product$ only in pair with $q_0$, \ie $(\X_1,q_0)\in S_?^\product$. Note that for state $(\X_1,q_0)$, every successor state is of the form $((\X_1,\U_1^J),q_0)$, \ie $q'=q_0$. First, polytope $\X_1$ is refined with respect to the robust predecessor 
\begin{equation*}
\prer(\X_1,\U,\{\X_2\}),
\end{equation*}
since $J_{yes}^{q_0}=\{\X_2\}$ because $(\X_2,q_0)\in S_{yes}^\product$ is a winning state of the product game. The robust predecessor set is depicted in Fig.~\ref{fig:exgame4} in cyan. Next, we decide which of the three cases described in the positive refinement procedure above applies to state $(\X_1,q_0)$. Consider for example Player 1 action $\U_1^{\{1,2,5\}}$ and Player 2 action $\{2\}$, as shown in Fig.~\ref{fig:exgame2}. It holds that
\begin{equation*}
(\X_1,q_0)\xrightarrow{\U_1^{\{1,2,5\}}} ((\X_1,\U_1^{\{1,2,5\}}),q_0)\xrightarrow{\{2\}} (\X_2,q_0),
\end{equation*} 
and $(\X_2,q_0)\in S_{yes}^\product$ is a winning state of the product game. Therefore, the state $(\X_1,q_0)$ is of the first type. To further refine polytope $\X_1$, we first partition the polytope
\begin{equation*}
\U_1^{\{1,2,5\}} = \{u\in \U\mid 0.1\leq u(1),u(2)\leq 1\},
\end{equation*}
\eg into 4 parts as shown in Fig.~\ref{fig:exgame4} on the right. The robust attractor 
\begin{equation*}
\attrr(\X_1,\U_3,\{\X_2\})
\end{equation*}
for one of the polytopes $\U_3$ is depicted in magenta in Fig.~\ref{fig:exgame4}. This polytope will be recognized as a satisfying initial polytope in the next iteration, since starting in any $x$ within the robust predecessor, system $\T$ as defined in Ex.~\ref{ex:ill1} evolves from $x$ under every control input from $\U_3$ to polytope $\X_2$.

\begin{figure}[t]
\begin{center}
\includegraphics[scale=0.25]{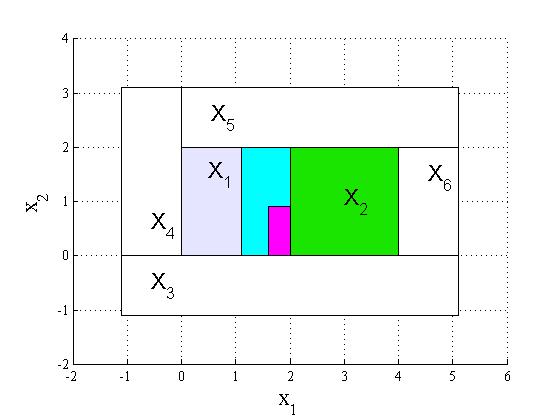} \quad \includegraphics[scale=0.15]{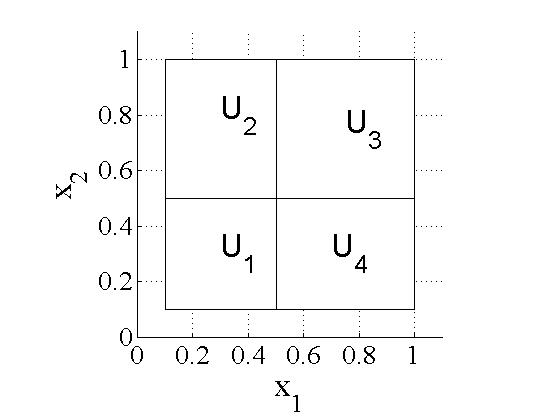}
\end{center}
\caption{Part of the positive refinement for the system in Ex.~\ref{ex:ill3}. Polytope $\X_1$ is first refined according to the robust predecessor as in Eq.~(\ref{eq:posrefprer}), the robust predecessor is shown in cyan. Next, we consider the polytope of control inputs $\U_1^{\{1,2,5\}}$ and its partition as depicted on the right. The robust predecessor of $\U_3$ is then shown in magenta.}\label{fig:exgame4}
\end{figure}
\end{example}


\noindent\textbf{Negative refinement.} In the negative refinement, we consider all Player 1 states $(\X_i,q)\in S_{?}^\product$ such that if Player 2 does not cooperate, but rather plays against Player 1, the game is lost with non-zero probability. In other words, for every Player 1 $\U_i^J$, there exists a Player 2 action $J'$ such that in Eq.~(\ref{eq:twosteps}), there exists an index $j\in J'$ such that $(\X_j,q')\in S_{no}^\product$. In this case, we refine polytope $\X_i$ according to the attractor set 
\begin{equation*}
\attr(\X_i,\U,\{\X_j\}_{j\in J_{no}^{q'}}).
\end{equation*}
Intuitively, the attractor set contains all states $x\in \X_i$ such that by applying any control input $u\in \U$, system $\T$ evolves from $x$ to a state in $\X_j$ for some $j\in J_{no}^{q'}$ with non-zero probability. In the next iteration of the algorithm, the partition elements given by the attractor set will belong to the set $S_{no}^\product$.

\begin{remark}\label{remark:interplay}
We remark that both game theoretic aspects as well as the linear stochastic dynamics play an important role in the refinement step. The game theoretic results compute the undecided states, and thereby determine what parts of the state space need to be refined and which actions need to be considered in the refinement. The linear stochastic dynamics allow us to perform the refinement itself using polytopic operators. 
\end{remark}

\subsection{Correctness and complexity}\label{subsec:solutionproperties}

We prove that the algorithm presented in this section provides a partially correct solution to Problem~\ref{pf}. 

For $n\in \mathbb{N}$, let $S_{\yes}^n,S_\no^n$ be the sets from Eq.~(\ref{eq:syes}) and (\ref{eq:sno}), respectively, computed in the $n$-th iteration of the algorithm presented above. We use $\X_\yes^n,\X_\no^n\subseteq \X$ to denote the union of polytopes from $S_{\yes}^n$ and $S_\no^n$, respectively. 

\begin{theorem}{\scshape{(Progress)}}\label{prop:algimprovement}
For every $n\in \mathbb{N}$, it holds that $\X_\yes^n\subseteq \X_\yes^{n+1}$ and $\X_\no^n\subseteq \X_\no^{n+1}$.
\end{theorem}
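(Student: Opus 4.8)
The plan is to show that refinement never destroys a correctly classified polytope: whatever was decided \emph{yes} (resp.\ \emph{no}) in iteration $n$ remains decided \emph{yes} (resp.\ \emph{no}) in iteration $n+1$. The key observation is that the partition $\{\X_i\}^{n+1}$ is a subpartition of $\{\X_i\}^{n}$ — every polytope of the new partition is contained in exactly one polytope of the old one — and that the product game on the finer partition \emph{refines} (in the sense of alternating simulation / bisimulation) the product game on the coarser partition. So first I would fix $n$, write $\product = \product_{\{\X_i\}^n}$ and $\product' = \product_{\{\X_i\}^{n+1}}$, and establish the following structural lemma: for each Player~1 state $(\X'_k,q)$ of $\product'$ with $\X'_k \subseteq \X_i$, and each Player~1 strategy available at $(\X_i,q)$ in $\product$, there is a corresponding Player~1 strategy at $(\X'_k,q)$ in $\product'$ whose reachable supports are ``no larger'' — i.e.\ for every action $\U_i^J$ allowed from $\X_i$ there is an action $\U_{k}^{J''}$ allowed from $\X'_k$ with $J'' \subseteq J$, and $\supp(\X'_k,\U_k^{J''})$ is, up to renaming polytopes by the sub-polytope containing the relevant successor, a subset-wise refinement of $\supp(\X_i,\U_i^J)$. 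This is exactly the content hinted at by the definitions of $\preo,\prer,\prep,\supp$ in Tab.~\ref{tab:polytopeops}: splitting a polytope can only shrink the set of successor polytopes reachable under a fixed control, and shrink the supports Player~2 can enforce.

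Given that lemma, the $S_{\yes}$ inclusion is the easy direction. By Prop.~\ref{prop:syesproduct}, $\X_i \in S_\yes^n$ iff $(\X_i,q_0) \in \almost^{\product}((E^\product,F^\product))$, witnessed by a memoryless pure Player~1 strategy (Sec.~\ref{subsec:games}). Pull this strategy back along the simulation to every sub-polytope $\X'_k \subseteq \X_i$ of the refined partition: since the refined game's Player~2 has only a subset of the enforceable behaviours and the B\"uchi-implication condition is defined via the automaton component $q$ (which is unaffected by the state-space split, by the product construction), the pulled-back strategy is still almost-sure winning from $(\X'_k,q_0)$. Hence every $\X'_k \subseteq \X_i$ lies in $S_\yes^{n+1}$, so $\X_\yes^n = \bigcup_{\X_i \in S_\yes^n}\X_i \subseteq \bigcup_{\X'_k \in S_\yes^{n+1}} \X'_k = \X_\yes^{n+1}$.

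For the $S_{\no}$ inclusion I would argue contrapositively using the cooperative game. By Prop.~\ref{prop:snoproduct}, $\X_i \notin S_\no^n$ iff $(\X_i,q_0) \in \almost^{\product^{\cop}}((E^\product,F^\product))$, i.e.\ there is a \emph{pair} of strategies jointly winning almost surely in the coarse cooperative game. Now I use the simulation in the other direction: a play of the refined cooperative game $\product'^{\cop}$ starting from $(\X'_k,q_0)$ with $\X'_k\subseteq\X_i$ projects (collapse each sub-polytope to its parent) to a play of $\product^{\cop}$ from $(\X_i,q_0)$, and this projection respects the acceptance condition because the automaton component is carried along identically. Conversely the winning cooperative strategy in $\product^{\cop}$ can be lifted: at a refined Player~1 state choose an action whose successor-set is contained in the coarse one (the structural lemma guarantees it exists), and at a refined Player~2 state choose a support projecting into the coarse winning support. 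This yields a cooperative almost-sure winning pair from $(\X'_k,q_0)$ in $\product'^{\cop}$, so $\X'_k \notin S_\no^{n+1}$. Equivalently: if $\X'_k \subseteq \X_i$ is in $S_\no^{n+1}$ then $\X_i \in S_\no^n$, which gives $\X_\no^n \subseteq \X_\no^{n+1}$ once we note that refinement only ever splits $S_?$-polytopes (Sec.~\ref{subsec:refinement}), so a polytope already in $S_\no^n$ is passed to the next iteration unsplit and classified identically, while any newly-$\no$ sub-polytope came from an $S_?^n$ parent whose every $x$ is, by Prop.~\ref{prop:s?} and the lifting argument, still not cooperatively winning — wait, that is the wrong direction, so the clean statement is the contrapositive one just given.

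The main obstacle is making the simulation/refinement lemma between $\product$ and $\product'$ precise enough to transfer almost-sure (rather than sure) winning in \emph{both} the adversarial and the cooperative games. Two points need care. First, almost-sure winning is not preserved by arbitrary simulations — it is preserved here only because the reduction to supports (Sec.~\ref{subsec:abstraction}) means the exact transition probabilities are irrelevant, so one must invoke the standard fact (e.g.\ \cite{baierbook}) that almost-sure B\"uchi-implication winning depends only on the support structure, and check that the support structure of $\product'$ alternating-refines that of $\product$ in the sense above. Second, strategies here are finite-memory, and the memory (the automaton state $q$) must be threaded through the lifting/projection unchanged; this is immediate from the product construction since the state-space split does not touch the $Q$-component, but it should be stated explicitly. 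Modulo these points, everything else is bookkeeping over the polytopic operators of Tab.~\ref{tab:polytopeops}.
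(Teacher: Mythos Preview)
Your overall plan is exactly the paper's one-line proof --- monotonicity under subpartition, read off Prop.~\ref{prop:syesproduct} and~\ref{prop:snoproduct} --- so there is no methodological difference to discuss. Your $S_\yes$ argument is fine. The $S_\no$ argument, however, has a genuine directional error.

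What you need for $\X_\no^n\subseteq\X_\no^{n+1}$ is: if $\X_i\in S_\no^n$ and $\X'_k\subseteq\X_i$, then $\X'_k\in S_\no^{n+1}$; contrapositively, a cooperative almost-sure win in the \emph{fine} game from $(\X'_k,q_0)$ must yield one in the \emph{coarse} game from $(\X_i,q_0)$. You state this projection direction in one sentence and then abandon it for the ``Conversely'' lifting (coarse $\to$ fine), from which you conclude $\X'_k\in S_\no^{n+1}\Rightarrow\X_i\in S_\no^n$. That implication, even if it held, gives $\X_\no^{n+1}\subseteq\X_\no^n$, the wrong inclusion; and you noticed this, but the patch via ``unsplit $S_\no$-polytopes are classified identically'' is circular (it \emph{is} the statement to be proved for those polytopes, since the rest of the game changed) and also not literally true: refinement in Sec.~\ref{subsec:refinement} splits any $\X_i$ with $(\X_i,q)\in S_?^{\product}$ for \emph{some} $q$, which can happen even when $(\X_i,q_0)\in S_\no^{\product}$.

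Worse, the lifting step itself is false. A coarse Player~2 support $L\in\supp(\X_i,\U_i^J)$ is witnessed by some $(x,u)$ with $x\in\X_i$ and $u\in\U_i^J$; nothing forces $x\in\X'_k$ or $u$ to lie in the fine action you chose, so in general there is \emph{no} fine support from $(\X'_k,\cdot)$ projecting to $L$. The coarse cooperative game strictly overapproximates the fine one, which is precisely why $S_\no$ grows under refinement. The fix is to run your projection sentence to completion: given a fine cooperative win from $(\X'_k,q_0)$, each fine move is witnessed by a concrete $(x,u)$ with $x\in\X'_k\subseteq\X_i$, hence the same $(x,u)$ witnesses a coarse action and a coarse support from $\X_i$; since almost-sure winning for B\"uchi implication depends only on supports, the projected pair is almost-sure winning in $\product^{\cop}$, giving $\X_i\notin S_\no^n$.
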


\begin{proof}
Follows from Prop.~\ref{prop:syesproduct} and \ref{prop:snoproduct}, and the fact that the partition of the state space $\X$ used in $n+1$-th iteration is a subpartition of the one used in $n$-th iteration.
\end{proof}

\begin{theorem}{\scshape{(Soundness)}}\label{prop:algsoundness}
For every $n\in \mathbb{N}$, it holds that $\X_\yes^n\subseteq \X_{init}$ and $\X_\no^n\subseteq \X\backslash \X_{init}$.
\end{theorem}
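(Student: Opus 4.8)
The plan is to obtain both inclusions as immediate consequences of the per-iteration correctness guarantees already proved, namely Prop.~\ref{prop:syes} and Prop.~\ref{prop:sno}. The only point worth stressing is that those two propositions are stated for an \emph{arbitrary} partition $\{\X_i\}$ and the associated game $\G_{\{\X_i\}}$, hence they apply verbatim to the partition and game produced at the $n$-th iteration of the algorithm; no appeal to the Progress theorem is needed and there is no circularity.

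For the first inclusion I would fix $n$ and an arbitrary $x\in \X_\yes^n$. By definition $\X_\yes^n$ is the union of the polytopes in $S_\yes^n$ computed in iteration $n$, so there is some $\X_i\in S_\yes^n$ with $x\in \X_i$. Applying Prop.~\ref{prop:syes} to the game $\G_{\{\X_i\}}$ of iteration $n$ yields a finite-memory control strategy $C_{\T}$ for $\T$ under which every trace starting in $x$ satisfies $\varphi$ with probability~$1$. By the definition of $\X_{init}$ in Problem~\ref{pf}, this is precisely $x\in \X_{init}$; since $x$ was arbitrary, $\X_\yes^n\subseteq \X_{init}$.

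For the second inclusion I would fix $n$ and $x\in \X_\no^n$, and pick $\X_i\in S_\no^n$ with $x\in \X_i$. Prop.~\ref{prop:sno}, applied to iteration $n$, states that no control strategy $C_{\T}$ for $\T$ makes every trace starting in $x$ satisfy $\varphi$ with probability~$1$; by the definition of $\X_{init}$ this means $x\notin \X_{init}$, i.e.\ $x\in \X\setminus\X_{init}$. Hence $\X_\no^n\subseteq \X\setminus\X_{init}$, which completes the proof.

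There is essentially no obstacle in the theorem itself: it is a bookkeeping corollary of the two propositions. The actual content sits upstream in Prop.~\ref{prop:sno}, which relies on the modelling fact that Player~2 in $\G_{\{\X_i\}}$ faithfully over-approximates the adversary's freedom to resolve the unknown exact state within each polytope, so that cooperative unsatisfiability in the abstraction certifies unsatisfiability in $\T$ even against strategies that may reselect a state inside each $\X_i$ at every step.
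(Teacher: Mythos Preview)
Your proposal is correct and follows exactly the paper's approach: the paper's own proof is the one-line ``Follows directly from Prop.~\ref{prop:syes} and \ref{prop:sno},'' and you have simply spelled out the obvious unpacking of that sentence. Your remark that these propositions apply to the partition and game of any iteration, so no circularity or appeal to progress is needed, is accurate and worth keeping.
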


\begin{proof}
Follows directly from Prop.~\ref{prop:syes} and \ref{prop:sno}.
\end{proof}

\begin{theorem}{\scshape{(Partial Correctness)}}\label{prop:algpartialcorr}
If the algorithm from Sec.~\ref{sec:solution} terminates, after $n$-th iteration, then $\X_{init}=\X_\yes^n$ is the solution of Problem~\ref{pf} and the corresponding winning strategies for every $x\in \X_{init}$ are given by the winning strategies in the 2\nicefrac{1}{2}-player game from the last iteration. 
\end{theorem}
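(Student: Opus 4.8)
The plan is to combine the soundness theorem (Thm.~\ref{prop:algsoundness}) with the termination condition of the algorithm, namely that the loop stops exactly when $S_?$ becomes empty (or when $S_\no$ covers all of $S_1$, which is the special case $\X_{init}=\emptyset$). First I would recall from Thm.~\ref{prop:algsoundness} the two inclusions $\X_\yes^n\subseteq \X_{init}$ and $\X_\no^n\subseteq \X\setminus \X_{init}$, which already give one direction: every state in $\X_\yes^n$ is a satisfying initial state, and the witness strategies are precisely the finite-memory strategies $C_\T$ constructed in the proof of Prop.~\ref{prop:syes} from the almost-sure winning strategies in the last game $\G_{\{\X_i\}}$. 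So the only thing left to prove is the reverse inclusion $\X_{init}\subseteq \X_\yes^n$, i.e.\ that upon termination no satisfying initial state has been left out.

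The key observation is that upon termination $S_?^n=\emptyset$, so by Eq.~(\ref{eq:s?}) we have $S_1 = S_\yes^n \cup S_\no^n$, hence $\X = \X_\yes^n \cup \X_\no^n$ (together with the outer polytopes $\{\X_{i_\out}\}$, which are not part of $\X$ and play no role here). Now take any $x\in \X$ with $x\notin \X_\yes^n$. Then $x$ lies in some polytope $\X_i\in S_\no^n$. By Prop.~\ref{prop:sno}, there is no control strategy $C_\T$ for $\T$ under which every trace starting in $x$ satisfies $\varphi$ with probability $1$; equivalently, $x\notin \X_{init}$. This shows $\X\setminus \X_\yes^n \subseteq \X\setminus \X_{init}$, i.e.\ $\X_{init}\subseteq \X_\yes^n$. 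Combined with the soundness inclusion $\X_\yes^n\subseteq \X_{init}$, we conclude $\X_{init}=\X_\yes^n$.

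For the witness strategies, I would simply invoke the construction in the proof of Prop.~\ref{prop:syes}: for each $\X_i\in S_\yes^n$ fix a finite-memory almost-sure winning Player~1 strategy in $\G_{\{\X_i\}}$ from the final iteration (it exists by the results recalled in Sec.~\ref{subsec:games}), and project it down to a finite-memory control strategy $C_\T$ that reads the current polytope and automaton state. The fact that this $C_\T$ enforces $\varphi$ almost surely from every $x\in \X_i$ is exactly the content of Prop.~\ref{prop:syes}. Finally, I would dispatch the degenerate case: if the algorithm terminates because $S_\no^n = S_1$, then $S_\yes^n=\emptyset$, and by Prop.~\ref{prop:sno} every $x\in\X$ fails to be in $\X_{init}$, so $\X_{init}=\emptyset=\X_\yes^n$, consistent with the statement.

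I do not expect a serious obstacle here, since the theorem is essentially a bookkeeping combination of Prop.~\ref{prop:syes}, Prop.~\ref{prop:sno}, and the termination test; the only point requiring a little care is making explicit that ``termination'' means $S_?^n=\emptyset$ (or the $S_\no$-covers-everything special case), so that the partition $\{\X_\yes^n,\X_\no^n\}$ of $\X$ lets the non-satisfaction result of Prop.~\ref{prop:sno} cover the complement of $\X_\yes^n$ exactly. The substantive work — soundness of $S_\yes$ via game simulation, and soundness of $S_\no$ via the cooperative semantics capturing arbitrary within-polytope strategies — has already been done in the preceding propositions.
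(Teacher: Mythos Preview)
Your proposal is correct and follows essentially the same approach as the paper: the paper's proof is a one-liner that simply cites the termination condition together with Theorems~\ref{prop:algimprovement} and~\ref{prop:algsoundness}, and your argument spells out exactly the logic that makes that citation work (termination $\Rightarrow S_?^n=\emptyset \Rightarrow \X=\X_\yes^n\cup\X_\no^n$, then soundness gives both inclusions). The only cosmetic difference is that you invoke Prop.~\ref{prop:syes} and Prop.~\ref{prop:sno} directly, whereas the paper routes through Thm.~\ref{prop:algsoundness} (which is itself proved from those propositions), and the paper additionally cites Thm.~\ref{prop:algimprovement}, which is not strictly needed for the equality at the terminating iteration.
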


\begin{proof}
Follows directly from the condition of the algorithm termination and from Th.~\ref{prop:algimprovement} and \ref{prop:algsoundness}.
\end{proof}

It is important to note that if instead of a 2\nicefrac{1}{2}-player game a weaker abstraction model such as a 2 player game, \ie the NTS from Sec.~\ref{subsec:abstraction}, was used, our approach would not be sound. Namely, some states of $\X$ might be wrongfully identified as non-satisfying initial states based on behavior that has zero probability in the original stochastic system. In such a case, even after termination, the resulting set would only be a subset of $\X_{init}$. Therefore, the approach with 2-player games is not complete. The 2\nicefrac{1}{2}-player game is needed to account for both the non-determinism introduced by the abstraction and for the stochasticity of the system to be able to recognize (non-satisfying) behavior of zero probability.

Note that there exist linear sto\-chas\-tic systems for which our algorithm does not terminate, \ie there does not exist a finite partition of the systems' state space over which Problem~\ref{pf} can be solved for a given GR(1) formula. 

\begin{example}{\scshape{(Non-termination)}}\label{ex:infinite} 
Let $\T$ be a linear sto\-chastic system of the form given in Eq.~(\ref{eq:linstoch}), where
\begin{equation*}
A=\begin{pmatrix}
1 & 0\\0 & 1
\end{pmatrix}, B = \begin{pmatrix}
1 & 0\\0 & 1
\end{pmatrix},
\end{equation*}
state space $\X\rev{}{=\U}=\{x\in \mathbb{R}^2\mid 0\leq x(1),x(2)\leq 3\}$, control space $\U=\{u\in \mathbb{R}^2\mid -1.5\leq u(1),u(2)\leq 1.5\}$ and the random vector takes values in polytope $\W=\{w\in \mathbb{R}^2\mid -0.5\leq w(1),w(2)\leq 0.5\}$. Let $\Pi$ contain four linear predicates that partition the state space 
into a grid of three by three equally sized square polytopes. 
Assume that the aim is to eventually reach the polytope $\X_f$, where $1\leq x(1),x(2)\leq 2$. In this case, the maximal set $\X_{init}$ of states from which $\X_f$ can be reached with probability 1 is the whole state space $\X$, as for any $x\in \X$, there exists exactly one control input $u=(1.5,1.5)-x\in \U$ that leads the system $\T$ from $x$ to a state in $\X_5$ with probability 1. Since the control input is different for every $x\in \X$, there does not exist any finite state space partition, which could be used to solve Problem~\ref{pf}. 
\end{example}


\medskip

\noindent\textbf{Complexity analysis.} Finally, let us analyze the computational complexity of the designed algorithm. In the abstraction part, the 2\nicefrac{1}{2}-player game $\G_{\{\X_i\}}$ requires to first compute the set of actions for every state $\X_i, i\in I$, in time in $\mathcal{O}(2^{|I|})$ using algorithm in App.~\ref{subapp:UiJ}. 
For every action $\U_i^J$, the set of valid supports $J'\subseteq J$ is then computed in time in $\mathcal{O}(2^J)$, see App.~\ref{app:polytopicop}. 
Overall, the abstraction runs in time in $\mathcal{O}(2^{2\cdot |I|})$. The game is then analyzed using the algorithm described in App.~\ref{app:gamesalg} 
in time in $\mathcal{O}(|I|^3)$. Finally, the refinement procedure iteratively refines every polytope $\X_i$ at most $|Q|\times |\{\U_i^J\}|$ times, where $\{\U_i^J\}$ denotes the set of all actions of $\X_i$. For every $q\in Q$ such that $(\X_i,q)\in S_?^\product$, $\X_i$ is first refined using the robust predecessor operator in time exponential in $|J_{yes}^{q'}|$. 
Then $\X_i$ is refined $|Y|$ times using the robust attractor operator in polynomial time. Negative refinement is performed again for every $q\in Q$ such that $(\X_i,q)\in S_?^\product$, using the attractor operator in polynomial time. Overall, the refinement runs in time in $\mathcal{O}(|Q|\cdot 2^{|I|})$.

As the game construction is the most expensive part of the overall algorithm, the refinement procedure is designed in a way that extends both sets $S_{yes},S_{no}$ as much as possible and thus speed up convergence and minimize the number of iterations of the overall algorithm. 


\section{Case study}\label{sec:casestudy}
\begin{algorithm}[t]
\caption{Computing the set $\X_{init}\subseteq \X$ of states from which a set of polytopes $\{\X_j\}_{j\in J}$ in $\X$ can be reached with probability 1. 
}\label{alg:polytopic}
\begin{algorithmic}
\State \textbf{Input:} linear stochastic system $\T$, polytopes $\{\X_j\}_{j\in J}$
\State $\X_{>0} \gets \{\X_j\}_{j\in J}$;
\While{$\X_{>0}$ is not a fixed point}
  \State $\X_{>0}\gets \preo(\X,\U,\X_{>0})$;
\EndWhile
\State $\X_{=0,attr} \gets \X_\out \cup \X\backslash \X_{>0}$;
\While{$\X_{=0,attr}$ is not a fixed point}
  \State $\X_{=0,attr}\gets \attr(\X,\U,\X_{=0,attr});$
\EndWhile
\State $\X_{=1}\gets \X\backslash \X_{=0,attr}$;
\State \textbf{return:} $\X_{=1}$
\end{algorithmic}
\end{algorithm}

We demonstrate the designed framework on a discrete-time double integrator dynamics with uncertainties. 
Let $\T$ be a linear stochastic system of the form given in Eq.~(\ref{eq:linstoch}), where 
\begin{equation}
A =\begin{pmatrix} 1 & 1 \\ 0 & 1\end{pmatrix},\quad B=\begin{pmatrix}0.5\\1\end{pmatrix}.
\end{equation}
The state space is $\X=\{x\in \mathbb{R}^2\mid -5\leq x(1)\leq 5,-3\leq x(2)\leq 3\}$ and the control space is $\U=\{u\in \mathbb{R}\mid -1\leq u\leq 1\}$. The random vector, or uncertainty, takes values within polytope $\W=\{w\in \mathbb{R}^2\mid -0.1\leq w(1),w(2)\leq 0.1\}$. The set $\Pi$ consists of 4 linear predicates $\pi_1:\, x(1)\leq -1, \pi_2:\,x(1)\leq 1, \pi_3:\,x(2)\leq -1, \pi_4:\,x(2)\leq 1$. 
We consider GR(1) formula 
\begin{equation*}
\F (\neg \pi_1\, \wedge \, \pi_2\, \wedge \, \neg \pi_3\, \wedge \, \pi_4)
\end{equation*}
that requires the system to eventually reach a state, where both variables of the system have values in interval $(-1,1)$.

As we consider a reachability property, we can compare our approach to the algorithm shown in Alg.~\ref{alg:polytopic} that is an extension of the reachability algorithm for Markov decision processes~\cite{baierbook} to linear stochastic systems. Intuitively, the algorithm finds the set $\X_{init}$ using two fixed-point computations. The first one computes the set of all states that can reach the given target polytopes with non-zero probability. As a result, the remaining states of the state space $\X$ have zero probability of reaching the target polytopes. The second fixed-point computation finds the attractor of this set, \ie all states that have non-zero probability, under each control input from $\U$, of ever transiting to a state from which the target polytopes cannot be reached. Finally, the complement of the attractor is the desired set $\X_{init}$.

Note that Alg.~\ref{alg:polytopic} operates directly on the linear stochastic system. It performs polytopic operations only, and it involves neither refinement nor building a product with an automaton. Therefore, it performs considerably faster than the abstraction-refinement algorithm from Sec.~\ref{sec:solution}, as shown in Tab.~\ref{tab:stats}. However, it has two serious drawbacks. 

\begin{table}[t]
\caption{Statistical comparison of the specialized algorithm for reachability and our approach. 
}\label{tab:stats}
\begin{center}
{
\renewcommand{\arraystretch}{1.4}
\begin{tabular}{r l}
\hline
\multicolumn{2}{c}{Algorithm \ref{alg:polytopic}}\\
\hline
1st fixed point: & in 7 iterations, in <1 sec.\\
2nd fixed point: & in 1 iteration, in <1 sec.\\ 
\hline
\multicolumn{2}{c}{Abstraction-refinement from Sec.~\ref{sec:solution}}\\
\hline
Initial partition: & in 3 sec. \\
& game: 13 states, 27 actions\\
1st iteration: & in 7 min. \\
& game: 85 states, 712 actions\\
2nd iteration: & in 19 min. \\
& game: 131 states, 1262 actions\\
3rd iteration: & in 56 min. \\
& game: 250 states, 2724 actions\\
\hline
\end{tabular}
}
\end{center}
\end{table}

Firstly, Alg.~\ref{alg:polytopic} computes the set of satisfying initial states of the system, but no satisfying strategy. In extreme cases, every state may use a different control input in order to reach polytopes 
computed during the fixed-point computations, as in Ex.~\ref{ex:infinite}. 
In order to extract a finite satisfying strategy (if there is one), these polytopes have to be partitioned to smaller polytopes so that a fixed input can be used in all states of the new polytope. This partitioning is exactly the refinement procedure that our method performs when applied to reachability. Note that simpler refinement methods such as constructing only the NTS $\N_{\{\X_i\}}$, which is the first step of the abstraction in Sec.~\ref{subsec:abstraction}, are not sufficient, see App.~\ref{app:algabsref}. 
The whole 2\nicefrac{1}{2}-player game abstraction presented in Sec.~\ref{subsec:abstraction} is necessary for ensuring the correctness of the strategy. 

Secondly, Alg.~\ref{alg:polytopic} cannot be used for more complex properties than reachability. For more complex formulas, the product of the game with the automaton for the formula needs to be considered, since 
a winning strategy may require memory and pure polytopic methods can only provide memoryless strategies. In contrast, our abstraction-refinement approach designed in Sec.~\ref{sec:solution} works for general GR(1) properties. Moreover, it could easily be extended to the whole LTL at the cost of a higher complexity. 

\begin{figure*}[t]
{
\renewcommand{\arraystretch}{1.4}
\begin{tabular}{c c c c}
\hline
\multicolumn{4}{c}{Algorithm~\ref{alg:polytopic}}\\
Initial partition & First fixed-point alg. & Second fixed-point alg. & Final result\\
\includegraphics[scale=0.2]{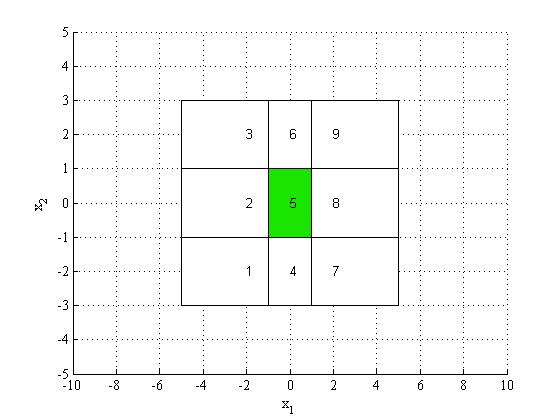} &
\includegraphics[scale=0.2]{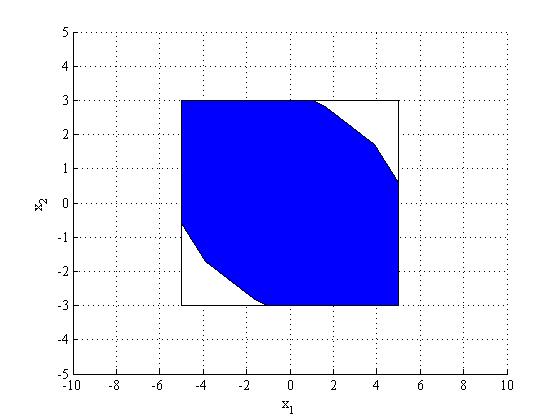} &
\includegraphics[scale=0.2]{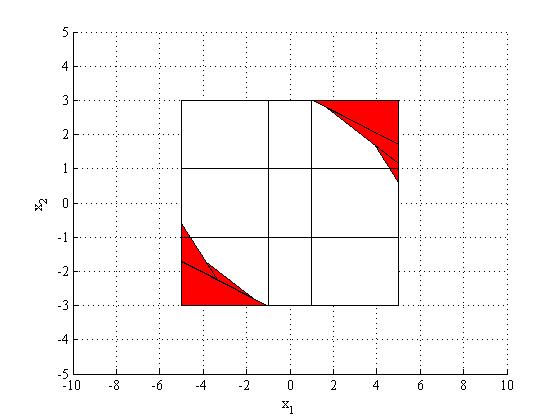} &
\includegraphics[scale=0.2]{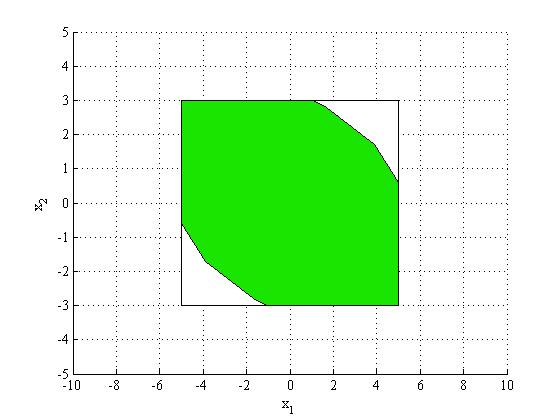} \\
\hline
\multicolumn{4}{c}{Abstraction-refinement from Sec.~\ref{sec:solution}}\\
Initial partition & First iteration & Second iteration & Third iteration\\
\includegraphics[scale=0.2]{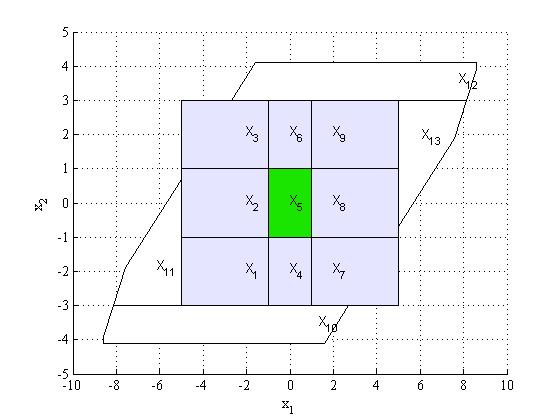} &
\includegraphics[scale=0.2]{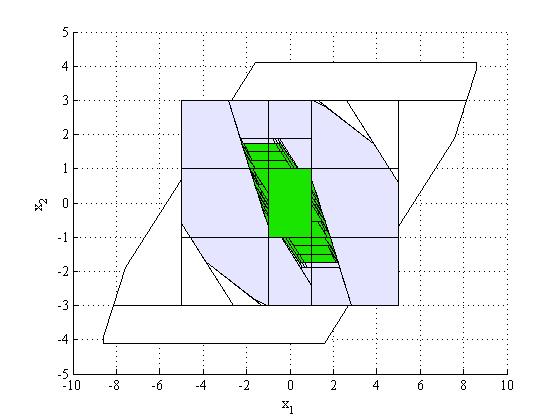} &
\includegraphics[scale=0.2]{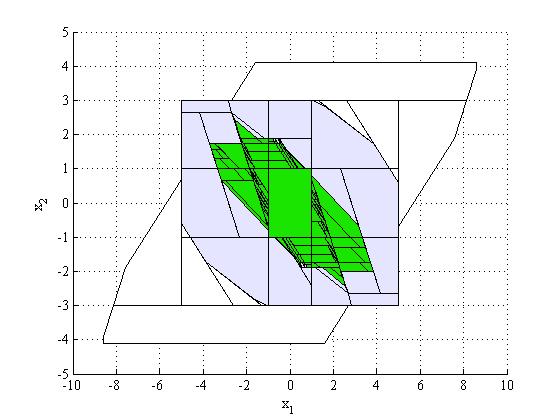} &
\includegraphics[scale=0.2]{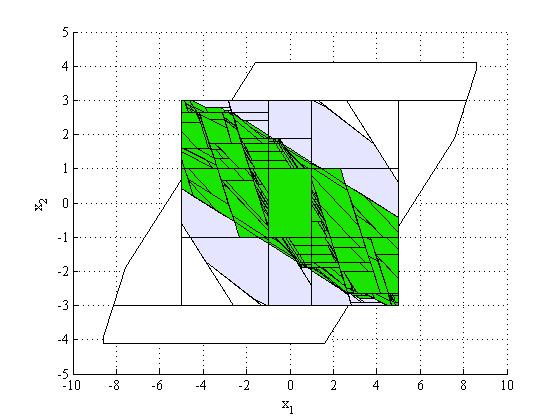}\\
\hline
\end{tabular}
}
\caption{Simulation of Alg.~\ref{alg:polytopic} and the abstraction-refinement algorithm. For Alg.~\ref{alg:polytopic}, we first depict the initial partition of $\X$ according to $\Pi$, with the polytope we aim to reach in green. The following columns show the two fixed point sets, in blue and red, respectively. The last column shows the resulting set $\X_{init}$. For the abstraction-refinement from Sec.~\ref{sec:solution}, we depict the results of for the initial partition and the next three iterations, where polytopes from sets $S_{yes},S_?,S_{no}$ are shown in green, light blue and white, respectively.}\label{fig:results}
\end{figure*}

We implemented both algorithms in Matlab, on a dual-core Intel i7 processor with 8 GB of RAM. The results are summarized in Fig.~\ref{fig:results} and Tab.~\ref{tab:stats}. For Alg.~\ref{alg:polytopic}, the set $\X_{init}$ was computed fast but it is a single polytope that does not provide any information about the satisfying strategies. For the abstraction-refinement algorithm, we computed the initial game and the following three iterations. Unlike for Alg.~\ref{alg:polytopic}, in every iteration, a satisfying strategy for a state $x$ in the partial solution is constructed as described in the proof of Prop.~\ref{prop:syes}.



\section{Conclusion and future work}\label{sec:futurework}

In this work, we considered the problem of computing the set of initial states of a linear stochastic system such that there exists a control strategy to ensure a GR(1) specification over states of the system. The solution is based on iterative abstraction-refinement using a 2\nicefrac{1}{2}-player game. Every iteration of the algorithm provides a partial solution given as a set of satisfying initial states with the satisfying strategies, and a set of non-satisfying initial states. 

While the algorithm guarantees progress and soundness in every iteration, it's complexity calls for more efficient implementation. The analyzed case study with a reachability property indicates that the current design would be too complex to deal with more complex properties such as persistent surveillance. In our future work, we aim to design efficient heuristic refinements that minimize the overall computation time for both reachability and general GR(1).

\noindent\textbf{Acknowledgement}
We thank Ebru Aydin Gol for useful discussions on the abstraction and polytopic computations.

{
\bibliographystyle{plain}
\bibliography{writeup}
}


\appendix

\section{Solving a 2\nicefrac{1}{2}-player game}\label{app:gamesalg}

\newcommand{\pre}{{\textsf{Pre}_1}}
\newcommand{\apre}{{\textsf{Pre}_2}}
\newcommand{\cpre}{{\textsf{Pre}_3}}
\newcommand{\wb}{\overline}
\newcommand{\Condpre}{\mathsf{C}_1}
\newcommand{\Condapre}{\mathsf{C}_2}
\newcommand{\Condcpre}{\mathsf{C}_3}

\algdef{SE}[DOWHILE]{Do}{doWhile}{\algorithmicdo}[1]{\algorithmicwhile\ #1}%


Here we present an algorithm to solve the almost-sure winning problem for a 2\nicefrac{1}{2}-player game $\G=(S_1,S_2,\act,\delta)$ with a B\"uchi implication condition $(E,F)$, where $E,F\subseteq S$. The optimal solution is a rather involved, quadratic time algorithm that can be found in~\cite{ChaThesis}. In this work, we use a more intuitive, cubic time algorithm presented in Alg.~\ref{alg:almost}, whose correctness follows from~\cite{CdAH11}. The algorithm is a simple iterative fixed-point algorithm that uses three types of local predecessor operator over the set of states of the game. 


Consider sets $X,Y,Z$ such that $Y \subseteq Z \subseteq X \subseteq S$. 
Given a state $s\in S$ and an action $a\in \act$, we denote by $\dest(s,a)= \supp(\delta(s,a))$ the set of possible successors of the state and the action. We define conditions on state action pairs as follows:
\begin{align*}
\Condpre(X) =  \{ (s,a) \mid & \dest(s,a) \subseteq X\}, \\
\Condapre(X,Y)  =  \{ (s,a) \mid & \dest(s,a) \subseteq X \text{ and} \\ 
& \dest(s,a) \cap Y \neq \emptyset \},\\
\Condcpre(Z,X,Y) = \{ (s,a) \mid & (\dest(s,a) \subseteq Z) \text{ or}\\
& (\dest(s,a) \subseteq X \text{ and}\\ 
& \dest(s,a) \cap Y \neq \emptyset) \}.
\end{align*}
The first condition ensures that given the state and action the next state is in $U$ with probability~1, the second condition ensures that the next state is in $X$ with probability~1 and in $Y$ with positive probability. The third condition is the disjunction of the first two. The three predecessor operators are defined as the set of Player~1, or Player~2 states, where there exists, or for all, respectively, actions, the condition for the  predecessor operator is satisfied. The three respective predecessor operators, namely, $\pre,\apre,$ and $\cpre$ are defined as follows:

\small
\vspace*{-\baselineskip}
\begin{align*}
\pre(X)  =& \{ s \in S_1 \mid \exists a \in \act. \ (s,a) \in \Condpre(X) \} \  \cup \\
& \{ s \in S_2 \mid \forall a \in \act. \ (s,a) \in \Condpre(X)\},\\
\apre(X,Y)  =&  \{ s \in S_1 \mid \exists a \in \act. \ (s,a) \in \Condapre(X,Y) \} \  \cup \\
& \{ s \in S_2 \mid \forall a \in \act. \ (s,a) \in \Condapre(X,Y)\},\\
\cpre(Z,X,Y)  =&  \{ s \in S_1 \mid \exists a \in \act. \ (s,a) \in \Condcpre(Z,X,Y) \} \  \cup \\
& \{ s \in S_2 \mid \forall a \in \act. \ (s,a) \in \Condcpre(Z,X,Y)\}. 
\end{align*}
\normalsize


\begin{algorithm}[t]
  \caption{Algorithm for $\almost^{\G}(\varphi)$}
  \label{alg:almost}
\begin{algorithmic}
\State \textbf{Input:} game $\G$, acc. condition $(E,F)$, $D = S \setminus (E \cup F)$;
\State Set: $X,Y,Z,\wb{X},\wb{Y},\wb{Z};$ 
\State $\wb{X} \gets S; \quad \wb{Z} \gets S; \quad \wb{Y} \gets \emptyset;$
\Comment Initialization
\Do
 \State $X \gets \wb{X}$
 \Do
  \State $Y \gets \wb{Y};$
  \Do
  \State $Z \gets \wb{Z};$
  \State $\wb{Z} \gets (F \cap \pre(X)) \cup (E \cap \apre(X,Y)) \cup$ \\ \hspace{7em} $ (D \cap \cpre(Z,X,Y);$
  \doWhile{$Z\neq \wb{Z}$}
  \State $\wb{Y}\gets Z;$
  \State $\wb{Z} \gets S;$
 \doWhile{$Y\neq \wb{Y}$}
 \State $\wb{X} \gets Y$
 \State $\wb{Y} \gets \emptyset;$
\doWhile{$X\neq \wb{X}$}
\State \textbf{return:} $X$
\end{algorithmic}
\end{algorithm}

\section{Polytopic operators}\label{app:polytopicop}
In this section, we describe in detail the computation of all polytopic operators introduced in Sec.~\ref{sec:solution} and used in our solution to Problem~\ref{pf}.


\subsection{Action polytopes}\label{subapp:UiJ}

First, we describe how to compute the action polytopes $\U_i^J$ for every polytope $\X_i\in \{\X_i\}_{i\in I}$, formally defined in Eq.~(\ref{eq:UiJ}).

For a polytope $\X'\subset \mathbb{R}^N$, we use $\U^{\X_i\to \X'}$ to denote the set of all control inputs from $\U$ under which the system $\T$ can evolve from a state in $\X_i$ to a state in $\X'$ with non-zero probability, \ie
\begin{align}\label{eq:Ufromto}
\U^{\X_i\to \X'} = \{u\in \U \mid & \post(\X_i,u)\cap \X' \text{ is non-empty}\}.
\end{align}
The following proposition states that $\U^{\X_i\to \X'}$ can be computed from the V-representations of $\X_i,\X'$ and $\W$.

\begin{proposition}\label{prop:Ufromto}
Let $H,K$ be the matrices from the H-representation of the following polytope:
\begin{equation}\label{eq:Ufromtohelp1}
\{ y\in \mathbb{R}^N \mid \exists x\in \X_i,\exists w\in \W:\, Ax+y+w\in \X' \},
\end{equation}
which can be computed as the convex hull 
\begin{align}
\hull(\{ & v_{\X'} - (Av_{\X_i} + v_{\W}) \mid v_{\X'}\in \vertices(\X'),\nonumber \\
&\quad  v_{\X_i}\in \vertices(\X_i), v_{\W}\in \vertices(\W) \}).\label{eq:Ufromtohelp2}
\end{align}
Then the set $\U^{\X_i\to \X'}$ defined in Eq.~(\ref{eq:Ufromto}) is the polytope with the following H-representation:
\begin{equation}\label{eq:Ufromtocomp}
\U^{\X_i\to \X'} = \{u\in \U\mid HBu\leq K\}.
\end{equation}
\end{proposition}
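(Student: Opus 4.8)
The plan is to unfold the definition of $\U^{\X_i\to\X'}$ and push the existential quantifiers over $x\in\X_i$ and $w\in\W$ into a single set of "admissible" values of the free variable $y = Bu$, then observe that this set is exactly the Minkowski-type combination in Eq.~(\ref{eq:Ufromtohelp2}), and finally intersect with $\U$ after substituting $y=Bu$. Concretely, first I would recall from Tab.~\ref{tab:polytopeops} that $\post(\X_i,u) = \{Ax+Bu+w \mid x\in\X_i, w\in\W\}$, so that $u\in\U^{\X_i\to\X'}$ holds iff $u\in\U$ and there exist $x\in\X_i$, $w\in\W$ with $Ax+Bu+w\in\X'$. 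Introducing $y=Bu$, this says $y$ lies in the set in Eq.~(\ref{eq:Ufromtohelp1}); hence $\U^{\X_i\to\X'} = \{u\in\U \mid Bu \in P\}$ where $P$ is the polytope of Eq.~(\ref{eq:Ufromtohelp1}). Writing $P$ in H-representation as $\{y \mid Hy\le K\}$ immediately gives Eq.~(\ref{eq:Ufromtocomp}), since $Bu\in P \Leftrightarrow HBu\le K$.

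The remaining content is the identity between the set in Eq.~(\ref{eq:Ufromtohelp1}) and the convex hull in Eq.~(\ref{eq:Ufromtohelp2}). Here I would argue as follows. The condition "$\exists x\in\X_i,\ \exists w\in\W:\ Ax+y+w\in\X'$" is equivalent to "$\exists x\in\X_i,\ \exists w\in\W,\ \exists x'\in\X':\ y = x' - Ax - w$", i.e. $y \in \X' \oplus (-A\X_i) \oplus (-\W)$, a Minkowski sum (with $A\X_i$ denoting the image of $\X_i$ under $A$). Since $\X_i,\X',\W$ are polytopes, $A\X_i$ is a polytope, the negation of a polytope is a polytope, and the Minkowski sum of polytopes is a polytope; moreover the vertices of a Minkowski sum of polytopes are among the sums of vertices of the summands, and $A(\hull V) = \hull(AV)$. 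Combining these, $\X'\oplus(-A\X_i)\oplus(-\W)$ is the convex hull of the finite set $\{v_{\X'} - Av_{\X_i} - v_\W\}$ ranging over vertices, which is exactly Eq.~(\ref{eq:Ufromtohelp2}). Thus the polytope in Eq.~(\ref{eq:Ufromtohelp1}) is well-defined and its V-representation is as claimed; converting to H-representation (a standard polytope operation) yields the matrices $H,K$.

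I do not expect a serious obstacle here — the statement is essentially a bookkeeping exercise with Minkowski sums and linear images of polytopes. The one point that needs a little care is the direction of the inclusion "vertices of the Minkowski sum $\subseteq$ sums of vertices of the summands": one must note that every point of the sum is a sum of points of the summands, each of which is a convex combination of vertices, and that bilinearity of $\oplus$ lets one merge these into a single convex combination of vertex-sums; hence the hull of the vertex-sums already contains (in fact equals) the Minkowski sum. The only mild subtlety specific to this paper is the standing convention (Sec.~\ref{subsec:polytopes}) that non-full-dimensional polytopes are treated as empty, so if $\X'\oplus(-A\X_i)\oplus(-\W)$ or its intersection with $\{u \mid u = B^{-1}y\}$ degenerates, the resulting $\U^{\X_i\to\X'}$ is taken to be empty; this is consistent with Eq.~(\ref{eq:Ufromtocomp}) and requires no change to the argument.
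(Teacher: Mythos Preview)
Your proposal is correct and follows essentially the same route as the paper's own proof: both unfold the definition of $\U^{\X_i\to\X'}$ to the form $\{u\in\U\mid \exists x\in\X_i,\exists w\in\W:\ Ax+Bu+w\in\X'\}$, identify the set in Eq.~(\ref{eq:Ufromtohelp1}) with the Minkowski-type combination $\X'\oplus(-A\X_i)\oplus(-\W)$ via the vertex representation, and then substitute $y=Bu$ into the resulting H-representation. Your write-up is in fact more careful than the paper's sketch, which simply asserts the V-representation by writing each point as a convex combination of vertices.
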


\begin{proof}
To fact that the set in Eq.~(\ref{eq:Ufromtohelp1}) is a polytope with the V-representation given in Eq.~(\ref{eq:Ufromtohelp2}) can be easily shown as follows. Let $y\in \mathbb{R}^N$ be such that there exist $x\in \X_i,w\in \W, x'\in \X'$ for which $Ax+y+w=x'$, \ie $y=x'-(Ax+w)$. By representing $x',x$ and $w$ as an affine combination of the respective vertices in $\vertices(\X'), \vertices(\X_i)$ and $\vertices(\W)$, we obtain the V-representation in Eq.~(\ref{eq:Ufromtohelp2}). Next, let $H, K$ be the matrices from the H-representation of the set in Eq.~(\ref{eq:Ufromtohelp1}). Then the definition of set $\U^{\X_i\to \X'}$ in Eq.~(\ref{eq:Ufromto}) can be written as
\begin{align*}
\U^{\X_i\to \X'} = \{u\in \U\mid & \exists x\in \X_i, \exists w\in \W:\\ 
& \qquad Ax+Bu+w\in \X'\},
\end{align*}
that leads to H-representation in Eq.~(\ref{eq:Ufromtocomp}). 
\end{proof}

\begin{corollary}
Let $J\subseteq I\cup I_{\out}$. The set $\U_i^J$ from Eq.~(\ref{eq:UiJ}) can be computed as follows:
\begin{equation}\label{eq:UiJcomp}
\U_i^J = \bigcap \limits_{j\in J} \U^{\X_i\to \X_j} \backslash \bigcup \limits_{j'\not \in J} \U^{\X_i\to \X_{j'}}.
\end{equation}
\end{corollary}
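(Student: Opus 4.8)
The plan is to unfold the definition of $\U_i^J$ given in Eq.~(\ref{eq:UiJ}) and show that each of the two defining conditions corresponds exactly to membership in one side of the set expression in Eq.~(\ref{eq:UiJcomp}), using Eq.~(\ref{eq:Ufromto}) as the bridge. Recall that $\U_i^J$ consists of those $u\in\U$ such that $\post(\X_i,u)\cap\X_j$ is non-empty for every $j\in J$ and empty for every $j\notin J$ (with $j,j'$ ranging over $I\cup I_{\out}$). The first step is to rewrite the ``non-empty for all $j\in J$'' condition: by Eq.~(\ref{eq:Ufromto}), $u$ satisfies $\post(\X_i,u)\cap\X_j\neq\emptyset$ precisely when $u\in\U^{\X_i\to\X_j}$, so the conjunction over $j\in J$ is equivalent to $u\in\bigcap_{j\in J}\U^{\X_i\to\X_j}$. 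Symmetrically, the ``empty for all $j'\notin J$'' condition says that for no $j'\notin J$ does $u$ lie in $\U^{\X_i\to\X_{j'}}$, i.e.\ $u\notin\bigcup_{j'\notin J}\U^{\X_i\to\X_{j'}}$. Conjoining the two gives exactly the set on the right-hand side of Eq.~(\ref{eq:UiJcomp}).

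The second step is the bookkeeping point that this set expression is genuinely computable as a polytopic object: each $\U^{\X_i\to\X_j}$ is a polytope by Prop.~\ref{prop:Ufromto}, a finite intersection of polytopes is a polytope, and the set difference of a polytope by a finite union of polytopes is a finite union of polytopes (computable by the standard complementation-and-intersection construction on H-representations). Since $J$ and $I\cup I_{\out}$ are finite, both the intersection and the union are finite, so the whole expression is effectively computable; this is what justifies the claim in the main text that $\U_i^J$ ``can be computed using only polytopic computations.''

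I do not expect a serious obstacle here: the statement is essentially a definitional rephrasing, and the only mild subtlety is making sure the index ranges match — the predicate ``$\forall j'\notin J$'' in Eq.~(\ref{eq:UiJ}) must be read as $j'\in(I\cup I_{\out})\setminus J$, so that $\bigcup_{j'\notin J}$ in Eq.~(\ref{eq:UiJcomp}) ranges over exactly those indices, and one should note that $\{\X_j\}_{j\in I\cup I_{\out}}$ partitions $\post(\X,\U)\supseteq\post(\X_i,u)$ so that every point of $\post(\X_i,u)$ is accounted for by some index, making the ``all empty outside $J$'' condition meaningful. Once that is pinned down, the equivalence is immediate from the two rewrites above, and the corollary follows.
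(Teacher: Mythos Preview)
Your proposal is correct and follows essentially the same approach as the paper, which simply states that the corollary follows directly from Eq.~(\ref{eq:UiJ}) and Eq.~(\ref{eq:Ufromto}). Your write-up spells out the definitional unfolding in more detail and adds the computability remark (justified via Prop.~\ref{prop:Ufromto}), but the underlying argument is identical.
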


\begin{proof}
Follows directly from Eq.~(\ref{eq:UiJ}) and~(\ref{eq:Ufromto}).
\end{proof}

Note that $\U_i^J$ is generally not a polytope but can be represented as a finite union of polytopes. 


\subsection{Posterior}

The posterior operator $\post(\X',\U')$, formally defined in Tab.~\ref{tab:polytopeops}, can be easily computed using Min\-kowski sum as
\begin{align*}
\post(\X'\U') &= A\X'+B\U'+\W\\
&= \hull(\{ A v_{\X'} + Bv_{\U'} + v_{\W} \mid v_{\X'}\in \vertices(\X'), \\
&\quad \quad  v_{\U'}\in \vertices(\U'), v_{\W}\in \vertices(\W) \}).
\end{align*}


\subsection{Predecessor}

The predecessor operator $\preo(\X',\U',\{\X_j\}_{j\in J})$, formally defined in Tab.~\ref{tab:polytopeops}, can be computed as follows. First, note that 
\begin{equation*}
\preo(\X',\U',\{\X_j\}_{j\in J}) = \bigcup \limits_{j\in J} \preo(\X',\U',\X_j).
\end{equation*}

\begin{proposition}\label{prop:preo}
Let $H,K$ be the matrices from the H-representation of the following polytope:
\begin{equation*}\label{eq:preohelp1}
\{ y\in \mathbb{R}^N \mid \exists u\in \U',\exists w\in \W:\, y+Bu+w\in \X_j \},
\end{equation*}
which can be computed as the convex hull 
\begin{align*}
\hull(\{ & v_{\X_j} - (Bv_{\U'} + v_{\W}) \mid v_{\X_j}\in \vertices(\X_j),\nonumber \\
&\quad  v_{\U'}\in \vertices(\U'), v_{\W}\in \vertices(\W) \}).\label{eq:preohelp2}
\end{align*}
Then the set $\preo(\X',\U',\X_j)$ is the polytope with the following H-representation:
\begin{equation*}\label{eq:preocomp}
\preo(\X',\U',\X_j) = \{x\in \X'\mid HAx\leq K\}.
\end{equation*}
\end{proposition}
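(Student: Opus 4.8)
The plan is to mirror exactly the structure of the proof of Proposition~\ref{prop:Ufromto}, since the predecessor operator $\preo(\X',\U',\X_j)$ is the ``dual'' of $\U^{\X_i\to\X'}$: there we fixed the source polytope and solved for admissible control inputs, here we fix the set of admissible control inputs $\U'$ and solve for admissible source states $x\in\X'$. First I would unfold the definition from Table~\ref{tab:polytopeops}: $x\in\preo(\X',\U',\X_j)$ iff $x\in\X'$ and there exist $u\in\U'$, $w\in\W$ with $Ax+Bu+w\in\X_j$. Rearranging, this says there exist $u\in\U'$, $w\in\W$ such that $Ax = x_j - Bu - w$ for some $x_j\in\X_j$, i.e. the point $y:=Ax$ lies in the set
\begin{equation*}
Y_j := \{\, y\in\mathbb{R}^N \mid \exists u\in\U',\exists w\in\W,\exists x_j\in\X_j:\ y+Bu+w = x_j \,\}.
\end{equation*}

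Next I would show $Y_j$ is a polytope with the claimed V-representation. Writing $y = x_j - Bu - w$ and expressing $x_j$, $u$, $w$ as convex (affine) combinations of the vertices of $\X_j$, $\U'$, $\W$ respectively, bilinearity of the Minkowski-type combination shows $Y_j$ equals the convex hull of the finite vertex set $\{\, v_{\X_j} - (Bv_{\U'} + v_{\W}) \mid v_{\X_j}\in\vertices(\X_j),\, v_{\U'}\in\vertices(\U'),\, v_{\W}\in\vertices(\W)\,\}$; this is essentially the same argument as the one given for Eq.~(\ref{eq:Ufromtohelp2}) in Proposition~\ref{prop:Ufromto}, just with $Ax$ in place of $y$ and with the roles of the fixed/variable polytopes permuted. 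Having a V-representation of $Y_j$, one computes an H-representation $Y_j = \{\, y \mid Hy \le K\,\}$ by the standard facet-enumeration (convex hull) algorithm. Then $x\in\preo(\X',\U',\X_j)$ iff $x\in\X'$ and $Ax\in Y_j$, i.e. iff $x\in\X'$ and $HAx\le K$, which is precisely the claimed H-representation. Finally, the decomposition $\preo(\X',\U',\{\X_j\}_{j\in J}) = \bigcup_{j\in J}\preo(\X',\U',\X_j)$ is immediate from the definition, since ``$\post(x,u)\cap\bigcup_{j\in J}\X_j$ non-empty'' is equivalent to ``$\post(x,u)\cap\X_j$ non-empty for some $j\in J$''.

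The routine parts here are the polytope manipulations; I expect the only subtlety worth a sentence of care is the interaction with the paper's full-dimensionality convention (Section~\ref{subsec:polytopes}): the set $Y_j$, or its intersection with the affine image $A\X'$, could fail to be full-dimensional and is then treated as empty, so one should note that the formula $\{x\in\X'\mid HAx\le K\}$ is understood modulo discarding lower-dimensional pieces, exactly as elsewhere in the development. Apart from that bookkeeping, the proof is a direct transcription of the argument already spelled out for Proposition~\ref{prop:Ufromto}.
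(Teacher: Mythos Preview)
Your proposal is correct and matches the paper's own approach exactly: the paper's proof of Proposition~\ref{prop:preo} simply states that it is analogous to that of Proposition~\ref{prop:Ufromto}, and your plan carries out precisely that analogy in detail.
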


\begin{proof}
The proof is analogous to the one of Prop.~\ref{prop:Ufromto}.
\end{proof}


\subsection{Robust and precise predecessor}\label{subapp:prerp}

From definitions of the robust and precise predecessor operators in Tab.~\ref{tab:polytopeops} it follows that
\begin{align*}
& \prer(\X',\U',\{\X_j\}_{j\in J}) = \\
&\quad \quad \bigcup \limits_{J'\subseteq J,J'\neq \emptyset} \prep(\X',\U',\{\X_j\}_{j\in J'}).
\end{align*}
Below we describe the computation of the precise predecessor $\prep(\X',\U',\{\X_j\}_{j\in J'})$ for any $J'\subseteq J$.

Let $\Z$ denote the polytope, or finite union of polytopes, $\Z=A\X'+B\U'$, where $+$ denotes the Minkowski sum. For a polytope $\mathcal{P}\subset \mathbb{R}^N$, we define set
\begin{equation}
\Z(\mathcal{P}) = \{z\in \Z\mid (z + \W)\cap \mathcal{P} \text{ is non-empty}\}. \label{eq:ZiJXprime}
\end{equation}
For a set of polytopes $\{\mathcal{P}\}$, $\Z(\{\mathcal{P}\})$ can be computed as the union of all $\Z(\mathcal{P})$ for every polytope $\mathcal{P}$ in the set $\{\mathcal{P}\}$.

\begin{proposition}
The set from Eq.~(\ref{eq:ZiJXprime}) is the following polytope, or finite union of polytopes:
\begin{align}
\Z(\mathcal{P}) = hull(\{v_{\mathcal{P}}-v_{\W}\mid & v_{\mathcal{P}}\in \vertices(\mathcal{P}),\nonumber \\
& v_{\W}\in \vertices(\W)\}) \cap \Z. \label{eq:ZiJjcomp}
\end{align}
\end{proposition}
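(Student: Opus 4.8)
The plan is to reduce the non-emptiness test defining $\Z(\mathcal{P})$ to a membership test in a Minkowski sum, and then apply the standard fact that the Minkowski sum of two polytopes given in V-representation is the convex hull of the pairwise sums of their vertices.

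First I would observe that for a fixed $z\in\mathbb{R}^N$ the set $(z+\W)\cap\mathcal{P}$ is non-empty if and only if there exist $w\in\W$ and $p\in\mathcal{P}$ with $z+w=p$, equivalently $z=p-w$. Writing $-\W=\{-w\mid w\in\W\}$ and using $+$ for the Minkowski sum as elsewhere in this appendix, this says $\{z\in\mathbb{R}^N\mid (z+\W)\cap\mathcal{P}\neq\emptyset\}=\mathcal{P}+(-\W)$, and intersecting with $\Z$ gives $\Z(\mathcal{P})=\Z\cap\big(\mathcal{P}+(-\W)\big)$.

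Second I would prove the elementary lemma that for finite point sets $X,Y\subset\mathbb{R}^N$ one has $\hull(X)+\hull(Y)=\hull(X+Y)$, where $X+Y=\{x+y\mid x\in X,y\in Y\}$. The inclusion $\supseteq$ is immediate because $\hull(X)+\hull(Y)$ is convex and contains $X+Y$. For $\subseteq$, take $\sum_i\lambda_i x_i\in\hull(X)$ and $\sum_j\mu_j y_j\in\hull(Y)$; their sum equals $\sum_{i,j}\lambda_i\mu_j(x_i+y_j)$, where the coefficients $\lambda_i\mu_j$ are non-negative and sum to $(\sum_i\lambda_i)(\sum_j\mu_j)=1$, so the sum lies in $\hull(X+Y)$. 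Applying this with $X=\vertices(\mathcal{P})$ and $Y=\{-v_{\W}\mid v_{\W}\in\vertices(\W)\}$, and using $\mathcal{P}=\hull(\vertices(\mathcal{P}))$ together with $-\W=\hull(\{-v_{\W}\mid v_{\W}\in\vertices(\W)\})$, yields $\mathcal{P}+(-\W)=\hull(\{v_{\mathcal{P}}-v_{\W}\mid v_{\mathcal{P}}\in\vertices(\mathcal{P}),\,v_{\W}\in\vertices(\W)\})$. Combined with the first step this is exactly Eq.~(\ref{eq:ZiJjcomp}).

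Finally I would handle the bookkeeping that $\mathcal{P}$ and $\Z$ may each be a finite union of polytopes rather than a single polytope. Minkowski sum, the convex hull of pairwise vertex sums, and intersection all distribute over finite unions, so $\Z(\mathcal{P})$ is again a finite union of polytopes and Eq.~(\ref{eq:ZiJjcomp}) is read componentwise, the hull being formed from the vertices of each convex piece of $\mathcal{P}$ separately; the intersection with $\Z$ is then carried out on H-representations. I do not expect a genuine obstacle: the only point to watch is that $\hull(\vertices(\cdot))$ recovers a set only when that set is convex, so the vertex-sum identity must be applied piece by piece, and the whole substance of the argument is the Minkowski-sum lemma of the second step.
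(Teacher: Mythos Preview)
Your proposal is correct and follows essentially the same route as the paper. The paper's proof simply refers back to the first part of the proof of Prop.~\ref{prop:Ufromto}, which rewrites the defining condition as $z=p-w$ for some $p\in\mathcal{P}$, $w\in\W$, and then expresses $p$ and $w$ as convex combinations of the respective vertex sets to obtain the V-representation; your Minkowski-sum reformulation and the lemma $\hull(X)+\hull(Y)=\hull(X+Y)$ make exactly this step explicit.
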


\begin{proof}
The proof is carried out in a similar way as the first part of proof of Prop.~\ref{prop:Ufromto}.
\end{proof}

For $J'\subseteq J$, we use $\Z(J')$ to denote the set
\begin{equation}\label{eq:ZiJJprime}
\Z(J') = \bigcap \limits_{j\in J'} \Z(\X_j) \backslash \big( \bigcup \limits_{j\in J\backslash J'} \Z(\X_j) \cup \Z(\X_{\neg J})\big),
\end{equation}
where $\Z(\X_{\neg J}) = \Z((\X\cup \X_{\out})\backslash \bigcup \limits_{j\in J}\X_j)$.

\begin{proposition}
Let $\U'=\{\U_{l_1}\}_{l_1\in L_1}$, $J\subseteq J'$ and let $\Z(J') = \{\Z_{l_2}\}_{l_2\in L_2}$. Then the precise predecessor can be written as
\begin{align}
\prep(\X',\U',\{\X_j\}_{j\in J'}) = & \bigcup \limits_{l_1\in L_1} \bigcup \limits_{l_2\in L_2} \{ x\in \X' \mid  \nonumber \\
& \exists u\in \U_{l_1}: Ax+Bu\in \Z_{l_2} \}.\label{eq:prercomp}
\end{align}
Let $l_1\in L_1,l_2\in L_2$ and let $H,K$ be the matrices from the H-representation of the following polytope:
\begin{equation}
\{y\in \mathbb{R}^N\mid \exists u\in \U_{l_1}: y+Bu\in \Z_{l_2} \},
\end{equation}
which can be computed as the convex hull
\begin{equation}
\hull ( \{ v_{\Z_{l_2}} - Bv_{\U_{l_1}}\mid v_{\Z_{l_2}}\in \vertices({\Z_{l_2}}), v_{\U_{l_1}}\in \vertices({\U_{l_1}}) \} ).
\end{equation}
Then the set on the right-hand site of Eq.~(\ref{eq:prercomp}), for $l_1,l_2$, is a polytope with the following H-representation:
\begin{equation}
\{ x\in \X'\mid HAx\leq K \}.
\end{equation}
\end{proposition}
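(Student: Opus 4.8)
The plan is to split the statement into three claims. Claim~(i): for a fixed $x\in\X'$ one has $x\in\prep(\X',\U',\{\X_j\}_{j\in J'})$ if and only if there is some $u\in\U'$ with $Ax+Bu\in\Z(J')$. Claim~(ii): for each $l_1,l_2$ the set $P_{l_1,l_2}:=\{y\in\mathbb{R}^N\mid \exists u\in\U_{l_1}:\, y+Bu\in\Z_{l_2}\}$ equals the displayed convex hull and is therefore a polytope (a finite union of polytopes if $\Z_{l_2}$ is) with some H-representation $Hy\le K$. Claim~(iii): the $(l_1,l_2)$-summand of Eq.~(\ref{eq:prercomp}) is obtained from $P_{l_1,l_2}$ by substituting $y=Ax$ and intersecting with $\X'$, giving $\{x\in\X'\mid HAx\le K\}$. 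Given~(i), plugging in the finite decompositions $\U'=\bigcup_{l_1\in L_1}\U_{l_1}$ and $\Z(J')=\bigcup_{l_2\in L_2}\Z_{l_2}$ and distributing the existential quantifier over these finite unions produces Eq.~(\ref{eq:prercomp}), and (ii)--(iii) describe its summands.

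For Claim~(i) I would unfold the definition of $\prep$ from Tab.~\ref{tab:polytopeops} using $\post(x,u)=Ax+Bu+\W$. Fix $u\in\U'$ and set $z:=Ax+Bu$; since $x\in\X'\subseteq\X$ and $u\in\U'\subseteq\U$ we have $z\in\Z=A\X'+B\U'$ and $z+\W=\post(x,u)\subseteq\post(\X,\U)\subseteq\X\cup\X_{\out}$, the last inclusion being the definition of $\X_{\out}$. The first requirement in $\prep$, that $\post(x,u)$ meet every $\X_j$ with $j\in J'$, is then exactly $z\in\bigcap_{j\in J'}\Z(\X_j)$ by Eq.~(\ref{eq:ZiJXprime}). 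The second requirement, $\post(x,u)\subseteq\bigcup_{j\in J'}\X_j$, is---since $z+\W$ already lies in $\X\cup\X_{\out}$---equivalent to $z+\W$ being disjoint from $(\X\cup\X_{\out})\setminus\bigcup_{j\in J'}\X_j$; because the cells $\X_j$ for $j\in J$ are full-dimensional (so pairwise overlaps are lower-dimensional, hence empty under this paper's convention) and together with $\X_{\out}$ cover $\X\cup\X_{\out}$, that leftover region equals $\bigl(\bigcup_{j\in J\setminus J'}\X_j\bigr)\cup\bigl((\X\cup\X_{\out})\setminus\bigcup_{j\in J}\X_j\bigr)$, and applying $\Z(\cdot)$, which distributes over finite unions, turns this disjointness into $z\notin\bigcup_{j\in J\setminus J'}\Z(\X_j)\cup\Z(\X_{\neg J})$. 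Comparing with the definition of $\Z(J')$ in Eq.~(\ref{eq:ZiJJprime}), the conjunction of the two requirements for this particular $u$ holds precisely when $z\in\Z(J')$, which proves Claim~(i).

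For Claim~(ii), rewrite the condition $y+Bu\in\Z_{l_2}$ as $y\in\Z_{l_2}-Bu$, so that $P_{l_1,l_2}=\bigcup_{u\in\U_{l_1}}(\Z_{l_2}-Bu)=\Z_{l_2}\oplus(-B\U_{l_1})$ is a Minkowski sum. Using that a linear image of a polytope is the hull of the images of its vertices---so $-B\U_{l_1}=\hull(\{-Bv_{\U_{l_1}}\mid v_{\U_{l_1}}\in\vertices(\U_{l_1})\})$---and that the Minkowski sum of two V-represented polytopes is the hull of the pairwise sums of vertices, one gets the displayed convex hull. For Claim~(iii), by the definition of $P_{l_1,l_2}$ with $y=Ax$ the $(l_1,l_2)$-summand of Eq.~(\ref{eq:prercomp}) is $\{x\in\X'\mid Ax\in P_{l_1,l_2}\}$, which equals $\{x\in\X'\mid HAx\le K\}$ by the H-representation of $P_{l_1,l_2}$; intersecting with $\X'$ preserves the polytope (or finite-union-of-polytopes) form. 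This is the same manipulation carried out in the proof of Prop.~\ref{prop:Ufromto}, applied here to $P_{l_1,l_2}$ instead of the set of Eq.~(\ref{eq:Ufromtohelp1}).

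The only genuinely delicate step is the set algebra inside Claim~(i): checking that the two subtracted families in Eq.~(\ref{eq:ZiJJprime}) jointly forbid exactly the targets lying outside $\bigcup_{j\in J'}\X_j$. This hinges on two facts that must be stated with care---that $\post$ never leaves $\X\cup\X_{\out}$, and that the cells $\X_j$, $j\in J$, together with $(\X\cup\X_{\out})\setminus\bigcup_{j\in J}\X_j$ cover the reachable region up to lower-dimensional overlaps---and on the inclusion $J'\subseteq J$ implicit in Eq.~(\ref{eq:ZiJJprime}). Everything else reduces to routine Minkowski-sum and polytope-projection bookkeeping already used elsewhere in this appendix.
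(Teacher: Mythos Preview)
Your proposal is correct and follows essentially the same approach as the paper's proof: the paper argues tersely that $z\in\Z(J')$ iff $z+\W$ meets every $\X_j$, $j\in J'$, and is contained in their union, identifies $z+\W$ with $\post(x,u)$, and defers the rest to the argument of Prop.~\ref{prop:Ufromto}. Your Claims~(i)--(iii) unpack precisely these steps, with the added care of spelling out the set algebra behind Eq.~(\ref{eq:ZiJJprime}) and the Minkowski-sum identity for $P_{l_1,l_2}$; you also correctly read the hypothesis as $J'\subseteq J$ (the printed ``$J\subseteq J'$'' is a typo).
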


\begin{proof}
From the definition of the set $\Z(J')$ in Eq.~(\ref{eq:ZiJJprime}), $z\in \Z(J')$ iff $z+\W$ intersects all $\X_j$ for $j\in J'$ and $z+\W\subseteq \bigcup \limits_{j\in J'} \X_j$. Moreover, every $z\in \Z$ can be written as $z=Ax+Bu$ and therefore $z+\W=\post(x,u)$. This proves Eq.~(\ref{eq:prercomp}). The rest of the proof is carried out in a way similar to the proof of Prop.~\ref{prop:Ufromto}.
\end{proof}


\begin{algorithm}[t]
  \caption{Computing the set $\X_{init}\subseteq \X$ of states from which a set of polytopes $\{\X_j\}_{j\in J}$ in $\X$ can be reached with probability 1, using abstraction to a NTS. 
  }
  \label{alg:absref}
\begin{algorithmic}
\State \textbf{Input:} linear stochastic system $\T$, partition $\{\X_i\}_{i\in I}$ of state space $\X$, subset $J\subseteq I$
\State $\X_{>0}\gets \emptyset$
\State $\X_{>0}' \gets \{\X_j\}_{j\in J}$
\While{$\X_{>0}\neq \X_{>0}'$}
  \State $\X_{>0}\gets \X_{>0}'$;
  \State construct NTS $\N_{\{\X_i\}}$ for current partition (Sec.~\ref{subsec:abstraction})  
  \For{every state $\X_i\not \subseteq \X_{>0}$}
    \State refine $\X_i$ according to $\preo(\X_i,\U,\X_{>0})$;
    \State $\X_{>0}'\gets \X_{>0}' \cup \preo(\X_i,\U,\X_{>0})$;
  \EndFor  
\EndWhile
\State $\X_{=0,attr}\gets \X_\out \cup \X$
\State $\X_{=0,attr}' \gets \X_\out \cup \X\backslash \X_{>0}$
\While{$\X_{=0,attr}\neq \X_{=0,attr}'$}
  \State $\X_{=0,attr}\gets \X_{=0,attr}'$;
  \State construct NTS $\N_{\{\X_i\}}$ for current partition (Sec.~\ref{subsec:abstraction})  
  \For{every state $\X_i$ s.t. all actions lead to $\X_{=0,attr}$}
    \State refine $\X_i$ according to $\attr(\X_i,\U,\X_{=0,attr})$;
    \State $\X_{=0,attr}'\gets \X_{=0,attr}' \cup \attr(\X_i,\U,\X_{=0,attr})$;
  \EndFor  
\EndWhile
\State $\X_{=1}\gets \X\backslash \X_{=0,attr};$
\State \textbf{return:} $\X_{=1}$
\end{algorithmic}
\end{algorithm}

\begin{figure*}[t]
{
\renewcommand{\arraystretch}{1.4}
\begin{tabular}{c c c c}
\hline
\multicolumn{4}{c}{Algorithm~\ref{alg:absref}}\\
Initial partition & First fixed-point alg. & Second fixed-point alg. & Final result\\
\includegraphics[scale=0.2]{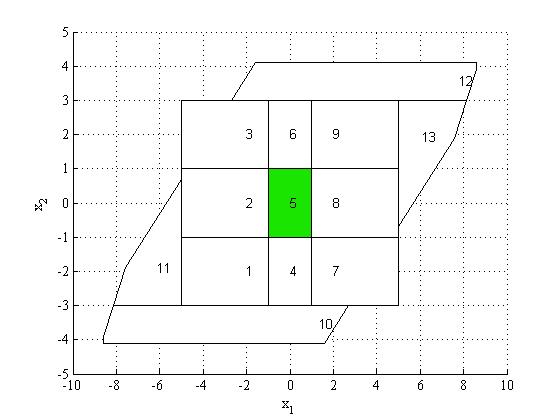} &
\includegraphics[scale=0.2]{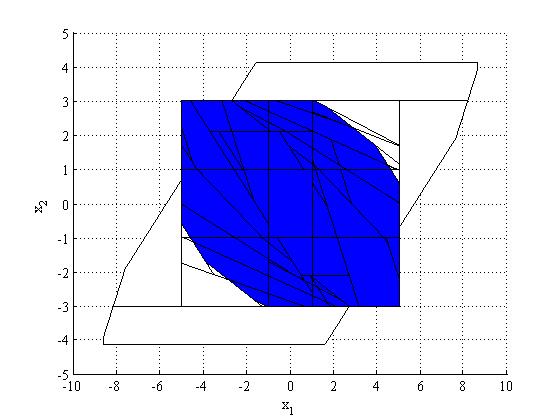} &
\includegraphics[scale=0.2]{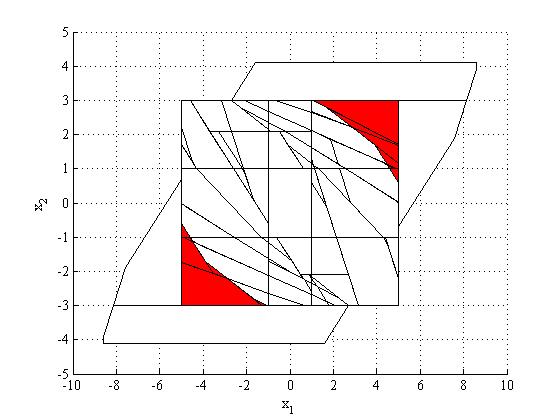} &
\includegraphics[scale=0.2]{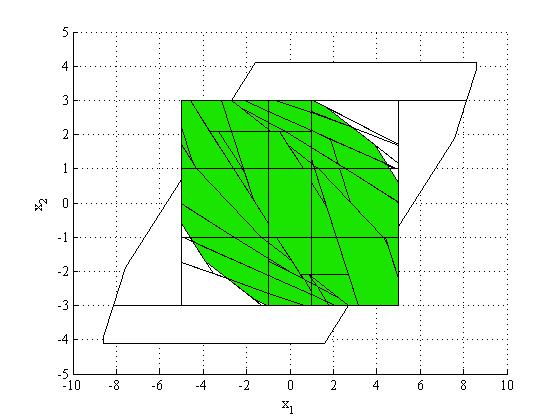} \\
\hline
\end{tabular}
}
\caption{Results obtained from simulation of Alg.~\ref{alg:absref} for the case study from Sec.~\ref{sec:casestudy}. In the first column, we depict the initial partition of $\X$ according to $\Pi$ and polytopes from $\X_\out$, with the polytope we aim to reach in green. The following columns show the fixed point sets, in blue and red, respectively, together with the obtained partition. The last column shows the resulting set $\X_{init}$. }\label{fig:results2}
\end{figure*}

\subsection{Attractor}

The attractor operator $\attr(\X',\U',\{\X_j\}_{j\in J})$ from Tab.~\ref{tab:polytopeops} can be computed using the robust predecessor operator, since it holds that
\begin{align*}
& \attr(\X',\U',\{\X_j\}_{j\in J}) =\\
&= \{x\in \X'\mid \forall u\in \U':\, \post(x,u) \cap \bigcup \limits_{j\in J}\X_j \text{ is non-empty}\}\\
&= \X' \backslash \{x\in \X'\mid \exists u\in \U':\, \post(x,u)\subseteq (\X\cup \X_{\out})\backslash \bigcup \limits_{j\in J}\X_j\}\\
&= \X'\backslash \prer(\X',\U',(\X\cup \X_{\out})\backslash \bigcup \limits_{j\in J}\X_j).
\end{align*}


\subsection{Robust attractor}

The robust attractor operator $\attrr(\X',\U',\{\X_j\}_{j\in J})$ from Tab.~\ref{tab:polytopeops} can be computed using the predecessor operator, since it holds that
\begin{align*}
& \attrr(\X',\U',\{\X_j\}_{j\in J}) =\\
&= \{x\in \X'\mid \forall u\in \U':\, \post(x,u) \subseteq \bigcup \limits_{j\in J}\X_j\}\\
&= \X' \backslash \{x\in \X'\mid \exists u\in \U':\, \post(x,u)\cap (\X\cup \X_{\out})\backslash \bigcup \limits_{j\in J}\X_j \\
& \quad \quad \quad \quad \quad \quad \quad \text{ is non-empty}\}\\
&= \X'\backslash \preo(\X',\U',(\X\cup \X_{\out})\backslash \bigcup \limits_{j\in J}\X_j).
\end{align*}

\section{Approach comparison}\label{app:algabsref}

Here, we can compare our abstraction-refinement approach from Sec.~\ref{sec:solution} to the algorithm for reachability presented in Alg.~\ref{alg:absref}. Alg.~\ref{alg:absref} combines the simple approach from Alg.~\ref{alg:polytopic} that uses only polytopic operations with the abstraction-refinement method. In every iteration, we build the non-deterministic transition system $\N_{\{\X_i\}}$, which is the first step of the abstraction in Sec.~\ref{subsec:abstraction}. The partition $\{\X_i\}_{i\in I}$ is then iteratively refined using the two fixed-point algorithms as in Alg.~\ref{alg:polytopic}. 

Just like in Alg.~\ref{alg:polytopic}, Alg.~\ref{alg:absref} operates directly on the linear stochastic system. It uses polytopic operators only and does not build a product with any automaton. Therefore, it performs faster than our approach, as demonstrated below. It however suffers from the same two serious drawbacks as the polytopic method. Firstly, it finds the set of satisfying initial states of the system, but no satisfying strategy. However, in comparison with Alg.~\ref{alg:polytopic}, it can provide at least a partial information on the properties of satisfying strategies. Namely, it specifies for every polytope of the resulting partition $\{\X_i\}$ of the state space $\X$ which control inputs cannot be used in any satisfying strategy. As we are interested in reachability property, these are the control inputs for which the corresponding non-deterministic transition leads from $\X_i$ outside of $\X_{init}$. Secondly, just like Alg.~\ref{alg:polytopic}, Alg.~\ref{alg:absref} cannot be used for more complex properties than reachability. As discussed in Sec.~\ref{sec:casestudy}, the product of the game with the automaton needs to be considered for more complex properties.

The results from simulations of Alg.~\ref{alg:absref} are presented in Fig.~\ref{fig:results2} and Tab.~\ref{tab:stats2}. The algorithm found fixed point sets for both fixed-point computation rather quickly, and in the same number of iterations as the polytopic algorithm in Alg.~\ref{alg:polytopic}, see Fig.~\ref{fig:results} and Tab.~\ref{tab:stats}. While Alg.~\ref{alg:absref} performs faster than our algorithm designed in Sec.~\ref{sec:solution}, it provides only partial information on the satisfying strategies, as discussed above. 


\begin{table}[t]
\caption{Statistics for the simulation of Alg.~\ref{alg:absref} for the case study from Sec.~\ref{sec:casestudy}.}\label{tab:stats2}
\begin{center}
{
\renewcommand{\arraystretch}{1.4}
\begin{tabular}{r l}
\hline
\multicolumn{2}{c}{Algorithm \ref{alg:absref}}\\
\hline
1st fixed point: & in 7 iterations, in 3 min. \\
& 1st NTS: 13 states, 27 actions\\
& 2nd NTS: 25 states 105 actions\\
& 3rd NTS: 45 states 289 actions\\
& 4th NTS: 63 states, 524 actions\\
& 5th NTS: 77 states, 745 actions\\
& 6th NTS: 88 states, 994 actions\\
& 7th NTS: 92 states, 1139 actions\\ 
2nd fixed point: & in 1 iteration, in 2 sec.\\
\hline
\end{tabular}
}
\end{center}
\end{table}



\end{document}